\newtheorem{theorem}{Theorem}
\begin{document}

\title{Self-Supervised Deep Graph Embedding with High-Order Information Fusion for Community Discovery}

\author{Shuliang~Xu \href{https://orcid.org/0000-0002-5464-2354}{\includegraphics[scale=0.25]{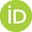}},
        Shenglan~Liu \href{https://orcid.org/0000-0003-3823-4200}{\includegraphics[scale=0.25]{orcid.png}},~\IEEEmembership{Member,~IEEE,}
        and~Lin~Feng \href{https://orcid.org/0000-0002-4942-2293}{\includegraphics[scale=0.25]{orcid.png}}
\IEEEcompsocitemizethanks{\IEEEcompsocthanksitem S. Xu is with the Faculty of Electronic Information and Electrical Engineering, Dalian University of Technology, Dalian, 116024 China.\protect\\
E-mail: slx\_cs@mail.dlut.edu.cn
\IEEEcompsocthanksitem S. Liu and L. Feng are with the School of Innovation and Entrepreneurship, Dalian University of Technology, Dalian, 116024 China.\protect\\
E-mail: liusl@dlut.edu.cn; fenglin@dlut.edu.cn}
\thanks{Manuscript received XX XX, XXXX; revised XX XX, XXXX.}}

\markboth{Journal of \LaTeX\ Class Files,~Vol.~14, No.~8, August~2015}%
{Shell \MakeLowercase{\textit{et al.}}: Bare Demo of IEEEtran.cls for Computer Society Journals}

\IEEEtitleabstractindextext{%
\begin{abstract}
Deep graph embedding is an important approach for community discovery. Deep graph neural network with self-supervised mechanism can obtain the low-dimensional embedding vectors of nodes from unlabeled and unstructured graph data. The high-order information of graph can provide more abundant structure information for the representation learning of nodes. However, most self-supervised graph neural networks only use adjacency matrix as the input topology information of graph and cannot obtain too high-order information since the number of layers of graph neural network is fairly limited. If there are too many layers, the phenomenon of over smoothing will appear. Therefore how to obtain and fuse high-order information of graph by a shallow graph neural network is an important problem. In this paper, a deep graph embedding algorithm with self-supervised mechanism for community discovery is proposed. The proposed algorithm uses self-supervised mechanism and different high-order information of graph to train multiple deep graph convolution neural networks. The outputs of multiple graph convolution neural networks are fused to extract the representations of nodes which include the attribute and structure information of a graph. In addition, data augmentation and negative sampling are introduced into the training process to facilitate the improvement of embedding result. The proposed algorithm and the comparison algorithms are conducted on the five experimental data sets. The experimental results show that the proposed algorithm outperforms the comparison algorithms on the most experimental data sets. The experimental results demonstrate that the proposed algorithm is an effective algorithm for community discovery.
\end{abstract}

\begin{IEEEkeywords}
Graph embedding, graph convolution neural network, self-supervised learning, complex network, community discovery.
\end{IEEEkeywords}}

\maketitle

\IEEEdisplaynontitleabstractindextext

\IEEEpeerreviewmaketitle

\section{Introduction}

\IEEEPARstart{N}{owadays}, graph is more and more common in a wide range of applications, such as social network analysis \cite{tang2017computational,liu2020deep}, citation network analysis \cite{yang2015defining}, product recommendation system \cite{du2019sequential,cen2020controllable}, knowledge graph \cite{zhang2019iteratively} and protein-protein interaction \cite{liu2019integrating}, etc. It is different from the traditional structured data that graph data is unstructured. The high dimension, unstructure and sparsity of graph data bring great challenges for community discovery. Therefore it is significant to transform graph data from the high dimensional and sparse space into the low dimensional and dense subspace.

Graph embedding is to learn a linear or nonlinear map function that can project graph data into low dimensional and dense vector space to facilitate downstream tasks such as product recommendation, community discovery, etc \cite{cui2018survey}. However, the conventional embedding algorithms, such as Locally Linear Embedding (LLE) \cite{roweis2000nonlinear}, Laplacian Eigenmap (LE) \cite{belkin2002laplacian} and Non-Negative Matrix Factorization (NMF) \cite{lee1999learning}, etc. only focus on the local structure of graph and ignore the higher-order structure information and attribute information. Recently, with the advances in deep learning, deep graph convolution neural network is an effective approach for community discovery \cite{rong2020deep} and it can obtain a strong representation ability by stacking multiple hidden layers. Deep graph convolution neural network can automatically learn the low dimensional and dense vectors of nodes from unstructured graph data and it can also colligate the structure information and attribute information into the embedding result. Therefore there are many advantages for deep graph convolution neural network dealing with graph data and researchers have paid much attention to this field.

Graph embedding is a hot topic in graph mining field. Up to now, researchers propose many graph embedding algorithms. For those algorithms, they are mainly divided into three categories: matrix factorization, random walk and graph neural network.

Matrix factorization is to decompose the adjacency matrix of a graph or the interactor matrix related to adjacency matrix. The representative algorithms are NetSMF \cite{qiu2019netsmf}, FONPE \cite{pang2017flexible}, AROPE \cite{zhang2018arbitrary}, MNMF \cite{wang2017community}, NSP \cite{qiu2019noise}, etc. However, there is a high time complexity for the most matrix factorization algorithms although the decomposed result has a good explanation. It means the scalability of the matrix factorization algorithms are not good. Therefore they are not suitable for community discovery tasks when the scale of the network is relatively large.

Random walk is to generate a random walk path sequence or a local subgraph as the context for each node and then it uses neural language model, such as Word2Vec \cite{mikolov2013efficient}, etc., to generate the low dimensional embedding vectors. The representative algorithms include DeepWalk \cite{perozzi2014deepwalk}, LINE \cite{tang2015line}, struc2vec \cite{ribeiro2017struc2vec}, metapath2vec \cite{dong2017metapath2vec}, SEED \cite{wang2019inductive}, SPINE \cite{guo2019spine}, MRF \cite{jin2019incorporating}, etc. However, the path of random walk depends on the topology information of a graph and cannot integrate the attribute information of nodes which is common in attribute graph. In addition, it is not possible to intervene in walk path according to the characteristics of a graph and random walk is biased to the nodes with large degree. Therefore the shortcomings of random walk place restrictions on its applications.

Graph neural network is an effective approach for graph embedding \cite{kipf2016semi,wu2020comprehensive} and it is proposed for dealing with unstructured graph data. Graph neural network can update the features of  nodes from their neighbors. With the powerful representation ability of deep learning, deep graph neural network achieves great success in graph mining although it may suffer from over smoothing problem \cite{xu2020powerful,loukas2020graph,chen2020simple}. The representative graph neural network algorithms are GraphSAGE \cite{hamilton2017inductive}, DropEdge \cite{rong2020dropedge}, GCKN \cite{chen2020convolutional}, GMI \cite{peng2020graph} and SDCN \cite{bo2020structural}, etc. However, how to efficiently train a graph neural network is still an open problem. Beyond that, the over smoothing problem restricts the depth of graph neural network although DropEdge and Dropcluster \cite{zhangdropping2020} propose some measures to slow down over smoothing. Nevertheless, DropEdge and Dropcluster cannot eliminate over smoothing and the performance of graph neural network still degenerates after the number layer of layer is too large. Therefore most graph neural networks are shallow layer and a shallow hidden layer restrains graph neural network from obtaining higher level features of nodes.

The high-order structure information of graph and the attribute information of nodes play important roles on graph embedding which can provide more information for community discovery. At present, due to the limitation of the number of hidden layers of graph neural network, most graph neural networks cannot obtain too high level embedding features of nodes. Therefore a self-supervised deep graph embedding algorithm with high-order information fusion for community discovery (SDGE) is proposed in this paper. SDGE uses multiple graph convolution neural networks which can integrate the attribute and structure information by convolution and aggregation operator to learn the dimensional embedding vectors. Deep graph convolution neural networks are trained from different high-order information and attribute information of nodes by introducing self-supervised learning mechanism. The final result is the fusion of the outputs of  multiple graph convolution neural networks. The main contributions of this paper are as follows:
\begin{itemize}
  \item A deep graph embedding algorithm is proposed in this paper. The different high-order information is employed to train multiple graph convolution neural networks. The final result is the output fusion of the multiple graph convolution neural networks.
  \item A data augmentation approach and negative sampling mechanism are introduced to improve the performance of the proposed algorithm. The graph convolution neural networks can be trained by self-supervised learning mechanism.
  \item The proposed algorithm introduces spectral propagation to enhance the embedding result.
  \item The proposed algorithm can keep the structure and attribute similarity in the low dimensional embedding space. The experimental results demonstrate the effectiveness of the proposed algorithm.
\end{itemize}

The rest of this paper is organized as follows: Section 2 reviews some related works; Sections 3 describes the theory and the detailed steps of the proposed algorithm; the experimental results and analysis are presented in Section 4; Section 5 concludes the paper and gives some research directions in the future.

\section{Related works}
Graph embedding is an important topic in graph mining and it is closely related to community discovery. In the early days, most graph embedding algorithms are from matrix factorization and random walk. In recent years, graph neural network facilitates the developments of this field and many deep graph embedding algorithms are proposed.

Wang et al. propose a deep attentional embedding approach \cite{wang2019attributed} called as DAEGC. DAEGC is a development of graph convolution neural network \cite{kipf2016semi}. It utilizes autocoder with double hidden layers to learn the embedding vectors of nodes, and then uses the clustering result to adjust the parameters of hidden layers by self-supervised learning. In graph neural network, DAEGC considers the weights of nodes and introduces attention mechanism to restructure the adjacency matrix and learn the target distribution.

Fan et al. propose a multi-view graph autoencoder for graph clustering \cite{fan2020one2multi} named as One2Multi. One2Multi employs graph convolution neural network as the hidden layers of autocoder and uses the adjacency matrices of multiple views to extract the features of nodes. Then the clustering result of k-means algorithm is used to learn the parameters of graph convolution neural network by self-supervised learning.

Wang et al. propose an adaptive multi-channel graph attentional convolutional network \cite{wang2020gcn} called as AM-GCN. AM-GCN firstly computes the similarity matrix of nodes from the attribute features of nodes by cosine or kernel method and then constructs a feature graph by KNN method. The feature graph and the original graph are input into graph convolution neural network. AM-GCN considers that the feature graph and the original graph describe the same graph. Therefore there is common information between the structure features and the attribute features. AM-GCN designs three graph convolution neural networks to learn the low dimensional embedding vectors of the structure features, the attribute features and the common information, respectively. The final embedding vectors of nodes are the weighting fusion of the three groups of the embedding vectors.

Chen et al. propose a robust node representation learning algorithm named as CGNN \cite{chen2020learning}. CGNN introduces contrastive learning \cite{liu2020self} to train graph neural network in an unsupervised way. For each node, CGNN discards some neighborhood edges with a certain probability and it can obtain two local subgraphs of each node. Then the two low dimensional embedding vectors can be output after the two local subgraphs are input into graph neural network. The two low dimensional embedding vectors of the same node are seen as a positive sample pair since they describe the local structure of the same node. CGNN selects \emph{k} different low dimensional embedding vectors of other nodes as negative samples. CGNN introduces noise contrastive estimation to train graph neural network, therefore it has a good robustness for the representation learning of nodes.

Qiu et al. propose a graph contrastive coding graph neural network called as GCC \cite{qiu2020gcc}. GCC can be pre-trained on a large data set and then fine-tuned on a specific task. For each node, GCC samples one or multiple \emph{r}-ego networks. The \emph{r}-ego networks sampled from the same node are as the positive samples. The \emph{r}-ego networks sampled from the different nodes are as the negative samples. GCC considers that the positive sample pair should be as similar as possible and the negative samples should be far away from the positive samples. The parameters of the autocoder which is a graph neural network are trained by contrastive learning. The low dimensional embedding vectors can be obtained by inputting the \emph{r}-ego networks of nodes into the autocoder.

\section{The proposed algorithm}
For a graph $G=\left \langle V, \bm{A}, E, \bm{X} \right \rangle$, \emph{V} is the set of nodes in the graph, $\bm{A}\in \mathbb{R}^{n\times n}$ is the adjacency matrix of \emph{G}, \emph{E} is the edge set of \emph{G} and $\bm{X}$ is the attribute features of nodes. For $\forall v_{i},v_{j}\in V$, $\emph{A}_{ij}=1$ if $e_{ij}\in E$, otherwise, $\emph{A}_{ij}=0$. It is known that $\bm{A}$ is high dimensional and sparse. Graph embedding is to learn a map function \emph{f}: $V\mapsto \mathbb{R}^{n\times d}$ which projects each node into a \emph{d}-dimension space $\left ( d\ll n \right )$ and the similarity between nodes is preserved.
\subsection{Model overview}
It is common for the nodes of social networks without labels and the high-order information of graph can provide much semantic information for community discovery task. Therefore a self-supervised deep graph embedding algorithm with high-order information fusion is proposed for community discovery. Figure 1 is an illustration of the proposed algorithm (SDGE). SDGE firstly computes \emph{r} different high-order matrices of adjacency matrix. It is known that different high-order matrices present different semantic information. Then there are \emph{r} graph convolution neural networks operating on \emph{r} different high-order matrices, respectively. The outputs of the graph convolution neural networks are the \emph{r} features of nodes. After obtaining the initial embedding vectors from the \emph{r} graph convolution neural networks, the final embedding vectors are the fusion of the initial embedding vectors. With the final embedding vectors of nodes being obtained, the community partition of nodes can be assigned by multilayer perceptron or clustering algorithm. Considering the unlabeled nodes of graph, self-supervised learning approach is introduced to train the graph convolution neural networks. In the training process, data augmentation approach and negative sampling \cite{yang2020understanding} are employed to train the graph convolution neural networks effectively. In addition, SDGE can preserve the similarities of structure information and attribute information. The overview of SDGE is seen as Fig. \ref{fg1}.
\begin{figure*}
  \centering
  \includegraphics[width=19cm]{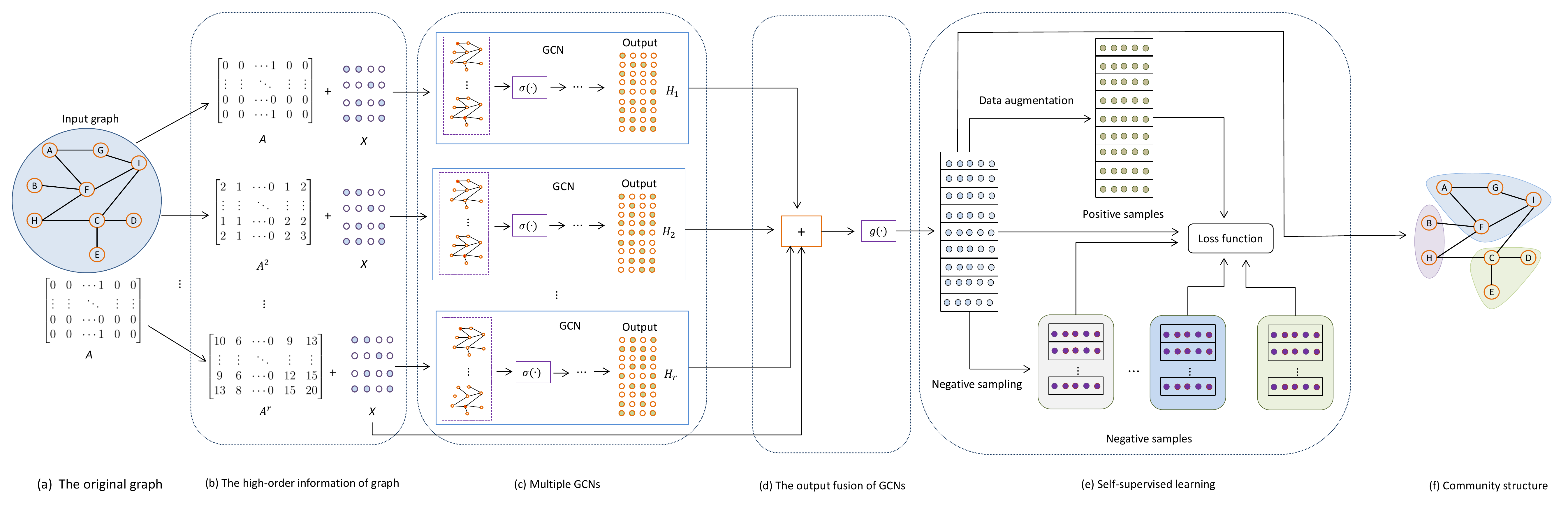}\\
  \caption{The overview of SDGE. It consists of six parts. (a): The original graph and the adjacency matrix are as the initial inputs. (b): Compute $\bm{A}^{r}$ with different \emph{r} values. The different multiplicative matrices represent random walks with different path lengths. (c): The different matrices of $\bm{A}^{r}$ are as the inputs of the \emph{r} GCNs, respectively. (d): The outputs of the \emph{r} GCNs are fused into the embedding vectors of nodes in original graph. (e): For each sample, new sample is generated by data augmentation which is seen as positive sample. Then SDGE samples several samples from the non-neighborhood nodes as the negative samples. The GCNs are trained by self-supervised learning approach. (f): The community structure is output by the end-to-end approach or clustering algorithm.}\label{fg1}
\end{figure*}
\subsection{Graph convolution neural networks with high-order information}
Let $\bm{A}$ be the adjacency matrix of the graph \emph{G} and $r\in \mathbb{Z}^{+}$ is the order of the adjacency matrix. $\bm{A}^{r}$ is defined as:
\begin{equation}\label{eq1}
  \bm{A}^{r} = \underset{r}{\underbrace{\bm{A}\cdot \bm{A}\cdot \cdot \cdot \bm{A}}}
\end{equation}
The value of \emph{r} determines the global and local information of the graph \emph{G}. A large \emph{r} means  $\bm{A}^{r}$ contains more global information.  A small \emph{r} means $\bm{A}^{r}$ contains more local information. In other words,  $\bm{A}^{r}$ is equal to the information of \emph{r}-step random walk in a graph. In order to make full use of local and global information, SDGE computes \emph{r} high-order multiplicative matrices of $\bm{A}$ which are as $\bm{A}, \bm{A}^{2}, \cdots, \bm{A}^{r}$. It can use $\bm{A}, \bm{A}^{2}, \cdots, \bm{A}^{r}$ as the inputs of \emph{r} graph convolution neural networks (GCNs), respectively. Therefore \emph{r} GCNs learn \emph{r} groups of the low dimensional embedding vectors of nodes from \emph{r} groups of the different high-order information of the graph \emph{G}.

Let $\bm{A}^{r}$ be the input of the \emph{r}th GCN $\left ( r=1,2,\cdots  \right )$ and $\bm{X}$ be the attribute features of nodes. The output of the \emph{r}th GCN in the $(\ell+1)$th layer is as follow:
\begin{equation}\label{eq2}
  \bm{H}_{r}^{\ell+1}=\delta \left (\widehat{\bm{D}}_{r}^{-\frac{1}{2}}\widehat{\bm{A}}_{r}\widehat{\bm{D}}_{r}^{-\frac{1}{2}}\bm{H}_{r}^{\ell}\bm{W}_{r}^{\ell}  \right )
\end{equation}
where $\bm{H}_{r}^{\ell}$ is the output of the \emph{r}th GCN in the $\ell$th layer, it is also the input of the $\left ( \ell+1 \right )$th layer and $\bm{H}_{r}^{0}=\bm{X}$; $\widehat{\bm{A}}_{r}=\bm{A}^{r}+\bm{I}$ and $\bm{I}$ is an identity matrix; $\bm{W}_{r}^{\ell}$ is the parameter of the $\ell$th layer and $\widehat{\bm{D}}_{r}$ is the degree matrix of $\widehat{\bm{A}}_{r}$; $\delta \left ( \cdot  \right )$ is the activation function. In general, different GCNs use different activation functions. For SDGE, Dynamic ReLU function is introduced and Chen et al. have proved that Dynamic ReLU can obtain better performance than ReLU \cite{chen2020dynamic}. In order to avoid over-fitting and improve the performance of SDGE, Batch Normalization \cite{ioffe2015batch} is introduced and the output of each hidden layer is normalized before it is input into the next hidden layer.

After inputting $\bm{A}, \bm{A}^{2}, \cdots, \bm{A}^{r}$ into \emph{r} GCNs, it can obtain \emph{r} groups of the outputs denoted as $\bm{H}_{1},\bm{H}_{2},\cdots,\bm{H}_{r}$ and $\bm{H}_{r}$ is the output of the \emph{r}th GCN in the last hidden layer. The final output $\bm{H}$ of GCNs is fused from the outputs of the \emph{r} GCNs by aggregation operator. The aggregation operator is defined as follow:
\begin{equation}\label{eq3}
  \bm{H}=Aggregate\left ( \bm{H}_{1},\bm{H}_{2},\cdots ,\bm{H}_{r} \right )
\end{equation}
where $Aggregate\left ( \cdot \right )$ is the aggregation function which can fuse the outputs of multiple GCNs. The most common aggregation functions include summation, mean, max and CONCAT, etc. In this paper, CONCAT or summation is selected as the aggregate function of SDGE because CONCAT and summation are injective \cite{xu2020powerful}. It is known that $\bm{H}_{1}$ contains more the local information of graph than $\bm{H}_{r}$ and $\bm{H}_{r}$ contains more the global information of graph than $\bm{H}_{1}$. The local and global information of graph play different roles on graph embedding. Therefore SDGE uses the weighting approach to fuse the outputs of multiple GCNs. For each output of GCN, it can use clustering algorithm to discover the community structure. After the community structure of $\bm{H}_{i}$ $\left ( i=1,2,\cdots ,r \right )$ is obtained, the modularity $Q_{i}$ of $\bm{H}_{i}$ which is a measure to reflect the strength of community structure \cite{newman2004finding} is computed as follow:
\begin{equation}\label{eq4}
  Q_{i} = \frac{1}{2\left | E \right |}\sum_{i^{'},j}^{\left | V \right |}\left ( A_{i^{'}j}-\frac{k_{i^{'}}\cdot k_{j}}{2\left | E \right |} \right )\cdot \sigma \left ( v_{i^{'}},v_{j} \right )
\end{equation}
where $\left | \cdot  \right |$ is the cardinality of a set, $k_{i^{'}}$ and $k_{j}$ are the degrees of the nodes $v_{i^{'}}$ and $v_{j}$, respectively. $\sigma \left ( v_{i^{'}},v_{j} \right )=1$ if $v_{i^{'}}$ and $v_{j}$ are in the same community; otherwise, $\sigma \left ( v_{i^{'}},v_{j} \right )=0$. The range of modularity is in $\left [ 0,1 \right ]$. A larger modularity means that the community structure is better which also reflects the output of GCN is better. Therefore the weight $\alpha_{i}$ of $\bm{H}_{i}$ is determined as follow:
\begin{equation}\label{eq5}
  \alpha_{i} = \frac{\exp\left ( Q_{i} \right )}{\sum\limits_{j=1}^{r}\exp\left ( Q_{j} \right )}
\end{equation}
If the aggregate function is summation or CONCAT, $\bm{H}$ is determined as follow:
\begin{equation}\label{eq6}
  \bm{H}=\sum\limits_{i=1}^{r}\alpha_{i}\bm{H}_{i} \ \ \ \text{or} \ \ \ \bm{H}=\left |  \right |_{i=1}^{r}\alpha_{i}\bm{H}_{i}
\end{equation}
where $\left |  \right |$ is the concatenate operator which concatenates the output matrices of GCNs. It is known that the attribute information of nodes is an important information for the partition of nodes. Therefore $\bm{X}$ is also concatenated to the output matrices of GCNs if the graph is an attribute graph. The final fusion is defined as follow:
\begin{equation}\label{eq61}
  \bm{H}\leftarrow \bm{H}\left |  \right | \bm{X}
\end{equation}

After the fusion of Eq.(\ref{eq61}) is obtained, the fusion is mapped nonlinearly by  multilayer perceptron (MLP). Chen et al.\cite{chen2020simple} prove that MLP is beneficial to self-supervised learning \cite{chen2020Asimple}. The final embedding result $\bm{Z}$ is determined as $\bm{Z}=g \left ( \bm{H} \right )\in \mathbb{R}^{n\times d}$ where $g \left (\cdot \right )$ is the nonlinear mapping function of MLP.
\subsection{The self-supervised learning of SDGE}
Self-supervised learning is an important unsupervised approach for graph neural network. It can train GCN from unlabeled graph data. The parameters of SDGE can be optimized by minimizing loss function. The diagram of self-supervised learning of SDGE is as Fig. \ref{fg2}.

For each node, SDGE employs data augmentation to generate positive samples. Let $\bm{Z}=\left [ \bm{z}_{1}^{T},\bm{z}_{2}^{T},\cdots ,\bm{z}_{n}^{T} \right ]^{T}$ and $\bm{z}_{i}\in\mathbb{R}^{1\times d}$ be the embedding vector of the node $v_{i}$.  For $\forall v_{i}\in V$, the noise, such as Gaussian noise, etc., is added into $\bm{z}_{i}$ and the new generated data is denoted as $\bm{z}_{i}^{+}\in \mathbb{R}^{1\times d}$. $\bm{z}_{i}^{+}$ and $\bm{z}_{i}$ can be seen as a positive sample pair since both of them are the descriptions of the node $v_{i}$. For the node $v_{i}$, it can also sample \emph{m} samples from the non-neighborhood nodes of $v_{i}$ which are seen as negative samples. The similarity between $\bm{z}_{i}$ and $\bm{z}_{i}^{+}$ should be as large as possible. The similarity between $\bm{z}_{i}$ and the negative samples of $\bm{z}_{i}$ should be as small as possible. Therefore the loss function of the self-supervised learning is defined as follow:
\begin{equation}\label{eq7}
  \mathcal{L}_{s}=\frac{1}{n}\sum_{i=1}^{n}\log\delta \left ( \bm{z}_{i}\cdot \bm{z}_{i}^{+}/\tau \right )-E_{\bm{z}_{j}\sim P_{n}\left ( v \right ), v_{j}\notin N\left ( v_{i} \right )}\left [ \log\delta \left ( \bm{z}_{i}\cdot \bm{z}_{j}/\tau \right ) \right ]
\end{equation}
In Eq.(\ref{eq7}), the first term is the similarity of the positive sample pair; the second term is the similarity between the node $v_{i}$ and the negative samples where $P_{n}\left ( v \right )$ is the distribution of negative samples, $N\left ( v_{i} \right )$ is the neighbors of $v_{i}$ and $\tau$ is the temperature.
\begin{figure}
  \centering
  \includegraphics[width=8.6cm]{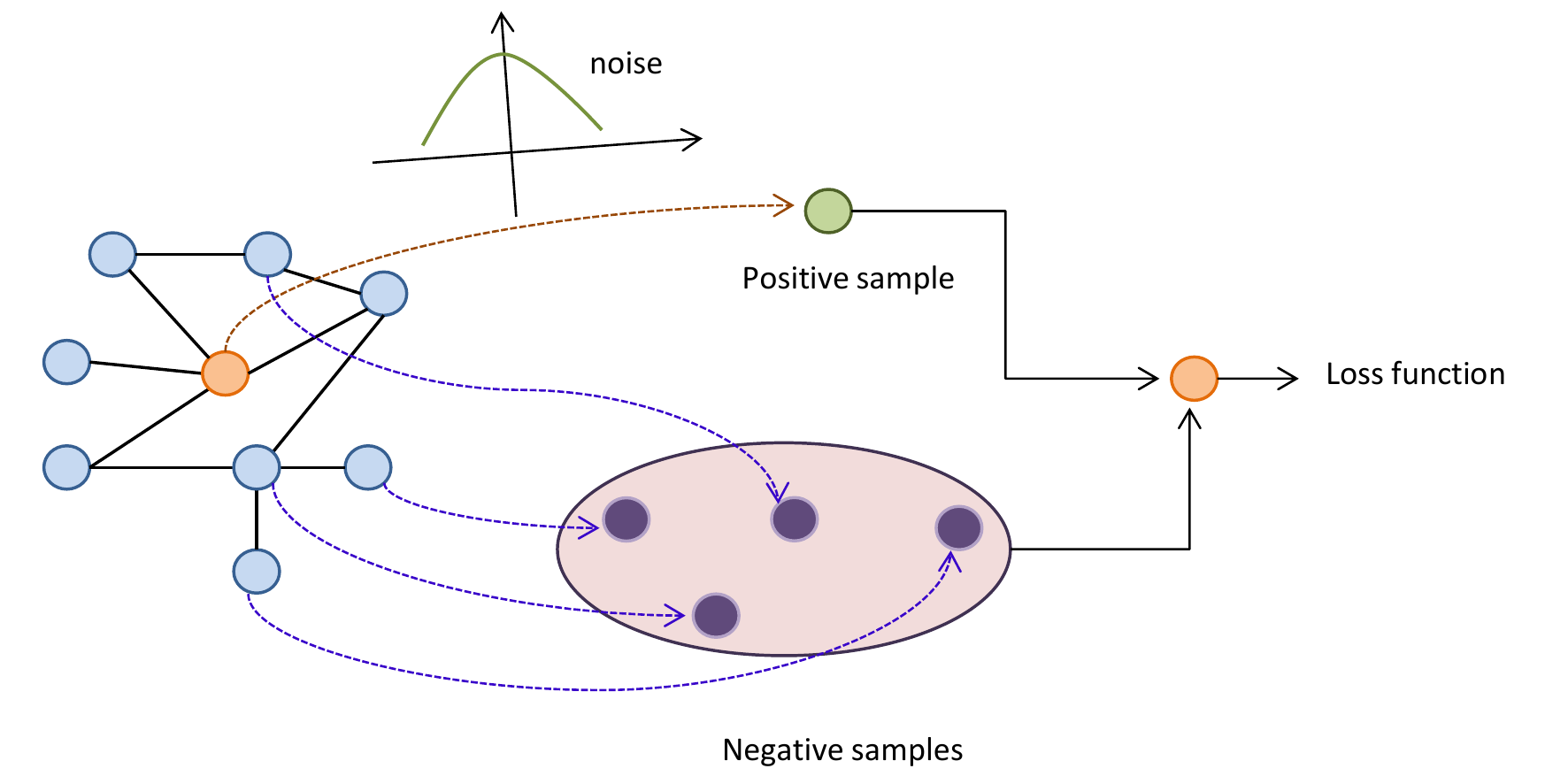}\\
  \caption{The self-supervised learning of SDGE. For each node, the positive sample is generated by adding noise into the original sample and the negative samples are sampled from the non-neighborhood nodes.}\label{fg2}
\end{figure}
It is known from Eq.(\ref{eq2}) that the output of MLP includes the structure information and the attribute information of nodes. Therefore the low dimensional embedding $\bm{Z}$ should preserve the similarities of the structure information and the attribute information after dimension reduction. The difference between $\bm{Z}$ and the structure information should be as small as possible and the difference between $\bm{Z}$ and the attribute information should be also as small as possible. The loss of the structure information and the attribute information of nodes are defined as follow:
\begin{equation}\label{eq8}
  \mathcal{L}_{sa}=\frac{1}{n^{2}}\left \| \bm{Z}\bm{Z}^{T}-\bm{A}\right \|_{F}^{2}+\frac{1}{n^{2}}\left \| \bm{Z}\bm{Z}^{T}-\bm{X}\bm{X}^{T}\right \|_{F}^{2}
\end{equation}
In Eq.(\ref{eq8}), the first term is the loss of the structure information and the second term is the loss of the attribute information. SDGE is to preserve the minimal loss of the structure information and the attribute information.

SDGE is to project the nodes of graph into the low dimensional space. Therefore the embedding result of SDGE should also satisfy the constraint of graph regularization. It means the similar nodes in high dimensional space are also similar in the dimensional embedding space. The loss function $\mathcal{L}_{r}$ is defined as:
\begin{equation}\label{eq9}
  \mathcal{L}_{r}= \frac{1}{n}\bm{tr}\left ( \bm{Z}^{T}\bm{L}\bm{Z} \right )
\end{equation}
where $\bm{L}$ is the Laplace matrix of graph \emph{G}. $\bm{L}=\bm{W}-\bm{D}$, $\bm{W}$ is the weight matrix of the edges in the graph \emph{G}, $\bm{D}$ is the degree matrix and $D_{ii}=\sum_{j=1}^{n}W_{ij}$.

From Eqs.(\ref{eq7})-(\ref{eq9}), the optimization problem of the loss function $\mathcal{L}$ is defined as:
\begin{equation}\label{eq10}
  \underset{\bm{Z}}{\min}\ \mathcal{L}=\beta\cdot\mathcal{L}_{sa}+\gamma\cdot\mathcal{L}_{r}-\mathcal{L}_{s}
\end{equation}
where $\beta\geq 0$ and $\gamma\geq 0$ are the predefined parameters. The parameters of GCNs and MLP are adjusted iteratively by backward propagating the loss of Eq.(\ref{eq10}).

$\bm{Z}$ can be solved from Eq.(\ref{eq10}). Then spectral propagation \cite{zhang2019prone} is introduced to enhance the embedding result. $\bm{Z}$ is updated as:
\begin{equation}\label{eq11}
  \bm{Z}\leftarrow \bm{D}^{-1} \bm{A}\left ( \bm{I}-\widetilde{\bm{L}} \right )\bm{Z}
\end{equation}
where $\widetilde{\bm{L}}=\bm{U}g\left ( \bm{\Lambda } \right )\bm{U}^{-1}$ is the modulated Laplacian and \emph{g} is as the spectral modulator. $\bm{U}$ and $\bm{\Lambda }$ are the eigenvalue decomposition results of the random walk normalized graph Laplacian $\bar{\bm{L}}=\bm{I}-\bm{D}^{-1}\bm{A}$ where $\bar{\bm{L}}=\bm{U}\bm{\Lambda }\bm{U}^{-1}$. $\bm{\Lambda }=diag\left ( \left [ \lambda_{1},\lambda_{2},\cdots ,\lambda_{n} \right ] \right )$. $\lambda_{i}$ $\left ( i=1,2,\cdots ,n \right )$ is the eigenvalue of $\bar{\bm{L}}$ and $0=\lambda_{1}\leq\lambda_{2}\leq\cdots \leq\lambda_{n}$. The reference \cite{zhang2019prone} presents truncated Chebyshev expansion to compute the modulated Laplacian efficiently.

If SDGE is an end-to-end approach, the dimension \emph{d} of $\bm{Z}$ which is the output of MLP is set to the number of communities and the community partition of the input graph \emph{G} is determined by $softmax(\cdot)$. If SDGE is not an end-to-end approach, the community structure of the graph \emph{G} can be obtained by the clustering algorithm after MLP outputs the low dimensional embedding vectors $\bm{Z}$.

Therefore the detail steps of SDGE algorithm are summarized as \textbf{Algorithm 1}.
\begin{algorithm}
        \caption{SDGE}
        \begin{algorithmic}[1] 
            \Require A graph $G=\left \langle V,\bm{A},E,\bm{X}\right \rangle$, the number of clusters \emph{k} and the dimension after dimension reduction \emph{d}; the parameters $\beta$ and $\gamma$ and \emph{r}.
            \Ensure The community structure $\bm{C}\in \mathbb{R}^{n\times k}$.
            \State Compute $\bm{A}, \bm{A}^{2}, \cdots, \bm{A}^{r}$;
            \State Initialize \emph{r} GCNs and MLP;
            \State Input $\bm{A}, \bm{A}^{2}, \cdots, \bm{A}^{r}$ into \emph{r} GCNs, respectively;
            \While{The loss $\mathcal{L}$ is not convergent}
            \State Obtain the embedding vectors $\bm{Z}$;
            \State Compute the loss $\mathcal{L}$ from MLP;
            \State Back propagate the loss $\mathcal{L}$ and adjust the parameters of GCNs and MLP;
            \State Enhance the embedding vectors $\bm{Z}$ by spectral propagation;
            \EndWhile
            \State Obtain the community partition $\bm{C}$ of graph \emph{G} from MLP or clustering algorithm.
        \end{algorithmic}
\end{algorithm}\label{ag1}

In \textbf{Algorithm 1}, SDGE can effectively fuse the structure information and the attribute information. In addition, multiple GCNs are utilized to extract the features of nodes which is equivalent to multi-view learning. Step 1 is to compute the high-order information of graph and the high-order information matrices include the local and global information. Steps 2-3 input the high-order information matrices into different GCNs. Therefore the embedding result integrates the local and global information of graph and the attribute information of nodes. Steps 4-8 are to train GCNs and MLP. Step 6 is to enhance the embedding result by spectral propagation. Step 10 obtains the community partition by the end-to-end approach or clustering algorithm.

\subsection{The complexity analysis}
For SDGE, the time complexity of Eq.(\ref{eq1}) is $\mathcal{O}\left ( rn^{3} \right )$. Let \emph{t} be the number of iterations of GCN and MLP and \emph{l} be the number of layers. The time complexity of GCNs is $\mathcal{O}\left (rtln^{3}\right )$. The time complexity of MLP is $\mathcal{O}\left (tnd^{2}\right )$. It costs $\mathcal{O}\left (rn^{2}\right )$ to compute $\bm{\alpha}$. The time complexity of the loss $\mathcal{L}$ is $\mathcal{O}\left (tnd^{2}+tn^{2}d+tnmd^{2}\right )$. It is known that $rtln^{3}\geq tnd^{2}$ and $tnd^{2} \leq tn^{2}d \leq rtln^{3}$. Therefore the time complexity of SDGE is $\mathcal{O}\left ( rtln^{3}+ tnmd^{2}\right )$.
\subsection{The performance analysis}
In Eq.(\ref{eq6}), the outputs of the GNCs are fused by weighting mechanism. It can use the high-order information of graph to project the representation of nodes into the low dimensional space effectively.
\begin{theorem}\label{th1}
In Eq.(\ref{eq6}), the fusion with weighting mechanism is of benefit to the output of MLP if the aggregation function is $sum\left ( \cdot  \right )$.
\end{theorem}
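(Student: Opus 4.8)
The plan is to make the informal phrase ``is of benefit'' precise by comparing the weighted fusion of Eq.(\ref{eq6}) against the naive unweighted (uniform) sum $\frac{1}{r}\sum_{i=1}^{r}\bm{H}_{i}$, and to show that the weighting mechanism yields a fused representation whose aggregate community quality, measured through the modularities $Q_{i}$, is never worse and is generically strictly better. Since Eqs.(\ref{eq4})--(\ref{eq5}) tell us that a larger $Q_{i}$ certifies a stronger community structure in $\bm{H}_{i}$, I would define the \emph{effective modularity of the fusion} as the corresponding convex combination $\sum_{i=1}^{r}\alpha_{i}Q_{i}$, so that the theorem reduces to establishing
\begin{equation*}
\sum_{i=1}^{r}\alpha_{i}Q_{i}\;\geq\;\frac{1}{r}\sum_{i=1}^{r}Q_{i}.
\end{equation*}

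First I would record the monotonicity of the softmax weights of Eq.(\ref{eq5}): because $t\mapsto e^{t}$ is strictly increasing, $Q_{i}\geq Q_{j}$ forces $\alpha_{i}\geq\alpha_{j}$, so the mechanism deterministically assigns larger coefficients to the GCN outputs with the better partition. This is the qualitative content of the claim; the quantitative content is the displayed inequality. Writing $\alpha_{i}=e^{Q_{i}}/\sum_{j}e^{Q_{j}}$, the left-hand side becomes the reweighted mean $\big(\sum_{i}Q_{i}e^{Q_{i}}\big)/\big(\sum_{i}e^{Q_{i}}\big)$, and the sequences $(Q_{i})$ and $(e^{Q_{i}})$ are similarly ordered. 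I would then invoke Chebyshev's sum inequality, $r\sum_{i}Q_{i}e^{Q_{i}}\geq\big(\sum_{i}Q_{i}\big)\big(\sum_{i}e^{Q_{i}}\big)$, and divide by $r\sum_{i}e^{Q_{i}}>0$ to obtain the inequality, with equality only when all $Q_{i}$ coincide. Hence the weighted fusion dominates the uniform fusion in aggregate modularity, and since the summation aggregator is injective (as already noted after Eq.(\ref{eq3})), this advantage is retained in $\bm{H}$ rather than collapsed.

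The last step is to transfer this advantage from the fused input $\bm{H}$ to the MLP output $\bm{Z}=g(\bm{H})$. Here I would argue that the map $g$ and the training objective are driven by the community signal carried by $\bm{H}$: because $g$ is continuous and the losses of Eqs.(\ref{eq8})--(\ref{eq9}) reward preserving the pairwise structure $\bm{Z}\bm{Z}^{T}\approx\bm{A}$, feeding the higher-modularity representation gives the MLP a strictly stronger community prior to preserve, so the optimum it can attain is no worse than under uniform weighting.

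I expect the genuinely hard part to be this final transfer step rather than the inequality. The phrase ``benefit to the output of MLP'' is not defined in the excerpt, so the argument must either (i) fix the benefit intrinsically at the fusion level, in which case the Chebyshev step essentially closes the proof, or (ii) posit a monotonicity or Lipschitz property of $g$ linking input modularity to output quality. I would make option (i) explicit, turning the theorem into a clean statement about the fusion, and then treat the MLP propagation as a corollary under the mild continuity of $g$ already assumed in the model.
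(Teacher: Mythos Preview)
Your route is genuinely different from the paper's. The paper does not compare the softmax weights against the uniform ones at all, nor does it touch the modularities $Q_{i}$ in the proof. Instead it reads the convex coefficients $\alpha_{1},\ldots,\alpha_{r}$ as a probability distribution over the GCN outputs $f_{1},\ldots,f_{r}$, so that the weighted sum $f=\sum_{i}\alpha_{i}f_{i}$ is literally $\mathbb{E}[f]$, and then invokes Jensen's inequality for the MLP map $F$ to obtain $\bigl\|F(\mathbb{E}[f])-F(f^{\ast})\bigr\|\leq\bigl\|\mathbb{E}[F(f)]-F(f^{\ast})\bigr\|$, with $f^{\ast}$ the ideal target. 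In other words, the paper's ``benefit'' is a \emph{fuse-then-MLP versus MLP-then-fuse} (ensemble) comparison, not a \emph{learned-weights versus uniform-weights} comparison. What this buys the paper is that the transfer to the MLP output is built in by construction rather than argued separately; what it costs is that the inequality holds for \emph{any} convex weights, so it never explains why the modularity-driven softmax of Eq.~(\ref{eq5}) is preferable, and it tacitly needs convexity of $x\mapsto\|F(x)-F(f^{\ast})\|$ for Jensen to apply. Your Chebyshev argument, by contrast, is the one place where the specific form $\alpha_{i}\propto e^{Q_{i}}$ actually does work, and the inequality $\sum_{i}\alpha_{i}Q_{i}\geq\frac{1}{r}\sum_{i}Q_{i}$ is clean and correct; the weakness is that your ``effective modularity of the fusion'' is a surrogate you introduced rather than the modularity of $\bm{H}=\sum_{i}\alpha_{i}\bm{H}_{i}$ itself, so the link to the MLP output remains exactly the heuristic step you flagged. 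If you want to align with the paper, drop the uniform baseline and Chebyshev entirely and run the one-line Jensen argument; if you want to keep your version, be explicit that you are proving a different (and arguably sharper) statement than the paper does.
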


\begin{proof}
Let $f_{1},f_{2},\cdots,f_{r}$ be the outputs of the GCNs and $\alpha_{1},\alpha_{2},\cdots,\alpha_{r}$ be the weights of the outputs $f_{1},f_{2},\cdots,f_{r}$.\\
$\therefore$ The fusion $f=\alpha_{1}f_{1}+\alpha_{2}f_{1}+\cdots+\alpha_{r}f_{r}=\sum_{i=1}^{r}\alpha _{i}f_{i}$.\\
$\because$ $\alpha _{1}+\alpha _{2}+\cdots +\alpha _{r}=1$ and $0\leq \alpha_{i}\leq 1$ $\left ( i=1,2,\cdots,r \right )$.\\
$\therefore$ $\mathbb{E}\left [ f \right ]=\sum_{i=1}^{r}\alpha _{i}f_{i}$ is the expectation of $f_{1},f_{2},\cdots,f_{r}$.\\
Let $f^{\ast}$ be the real target of MLP and $F\left ( \cdot  \right )$ be the map function of MLP.\\
From Jensen's inequality \cite{Zhou2020introduction}, it concludes as follow:
\begin{equation}\label{eq12}
\begin{aligned}
  &\left \| F\left ( \mathbb{E}\left [ \sum_{i=1}^{r}\alpha _{i}f_{i} \right ] \right )-F\left ( f^{\ast } \right ) \right\| \\
  &=\left \| F\left ( \mathbb{E}\left [ f \right ] \right )-F\left ( f^{\ast } \right ) \right \| \leq \left \| \mathbb{E}\left [ F\left ( f \right )\right ]-F\left ( f^{\ast } \right ) \right \|
\end{aligned}
\end{equation}
Therefore it is known that the output error can be decreased by the weighting mechanism.
\end{proof}

\begin{theorem}\label{th2}
In Eq.(\ref{eq6}), the fitting capability of MLP with the aggregation function $CONCAT(\cdot)$ is not weaker than the aggregation function $sum\left ( \cdot  \right )$.
\end{theorem}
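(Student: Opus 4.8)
The plan is to read "fitting capability" as the expressive power of the composite map $g\circ Aggregate$, i.e. the family of functions from $\left(\bm{H}_{1},\bm{H}_{2},\cdots,\bm{H}_{r}\right)$ to the MLP output that can be realized as the MLP parameters vary, and to prove the claim by showing that this family for $CONCAT(\cdot)$ contains the corresponding family for $sum(\cdot)$. Thus I would fix an arbitrary MLP $F_{\mathrm{sum}}$ acting on the summed feature $\bm{H}_{\mathrm{sum}}=\sum_{i=1}^{r}\alpha_{i}\bm{H}_{i}\in\mathbb{R}^{n\times d}$ and then exhibit an MLP $F_{\mathrm{cat}}$ acting on the concatenated feature $\bm{H}_{\mathrm{cat}}=\left[\alpha_{1}\bm{H}_{1},\alpha_{2}\bm{H}_{2},\cdots,\alpha_{r}\bm{H}_{r}\right]\in\mathbb{R}^{n\times rd}$ that produces an identical output. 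Once such an $F_{\mathrm{cat}}$ exists for every $F_{\mathrm{sum}}$, the CONCAT hypothesis class is a superset of the summation hypothesis class, which is exactly the assertion that CONCAT is not weaker than summation.

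The key step is an explicit weight-embedding at the first MLP layer. Let the first linear layer of $F_{\mathrm{sum}}$ have weight $\bm{W}\in\mathbb{R}^{d\times d'}$, so that its pre-activation is $\bm{H}_{\mathrm{sum}}\bm{W}=\sum_{i=1}^{r}\alpha_{i}\bm{H}_{i}\bm{W}$. For $F_{\mathrm{cat}}$ I would take the first-layer weight $\bm{W}'\in\mathbb{R}^{rd\times d'}$ to consist of $r$ vertically stacked copies of $\bm{W}$, and keep every subsequent layer together with every activation identical to those of $F_{\mathrm{sum}}$. Since $\bm{H}_{\mathrm{cat}}$ concatenates the blocks $\alpha_{i}\bm{H}_{i}$ along the feature dimension, the product satisfies $\bm{H}_{\mathrm{cat}}\bm{W}'=\sum_{i=1}^{r}\alpha_{i}\bm{H}_{i}\bm{W}$, which coincides with the pre-activation of $F_{\mathrm{sum}}$. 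Because the two networks then share all later weights and nonlinearities, $F_{\mathrm{cat}}\left(\bm{H}_{\mathrm{cat}}\right)=F_{\mathrm{sum}}\left(\bm{H}_{\mathrm{sum}}\right)$ holds identically, establishing the containment and hence the theorem.

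To make the comparison meaningful I would also observe that the containment is in general strict. Keeping the blocks separate lets the first layer of $F_{\mathrm{cat}}$ assign a distinct sub-block $\bm{W}_{i}\in\mathbb{R}^{d\times d'}$ to each $\alpha_{i}\bm{H}_{i}$, yielding pre-activations $\sum_{i=1}^{r}\alpha_{i}\bm{H}_{i}\bm{W}_{i}$ that cannot be reproduced by any single $\bm{W}$ in the summation model unless all $\bm{W}_{i}$ are forced equal. This confirms that CONCAT retains strictly more information than summation while still being able to imitate it, which is the intuition behind the statement; note, however, that only the $\supseteq$ direction is required by the theorem, and strictness is offered merely as additional justification.

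I expect the main obstacle to be definitional rather than computational. The phrase ``fitting capability not weaker'' is informal, and the argument becomes rigorous only once it is recast as containment of the realizable function classes; the supporting linear algebra is then routine. The care is therefore in matching the input dimensions ($d$ versus $rd$) and the layer structure of the two MLPs so that the reproduction of $F_{\mathrm{sum}}$ by $F_{\mathrm{cat}}$ is exact, and in being explicit that the weights $\alpha_{i}$ from Eq.(\ref{eq5}) are shared by both aggregation schemes so that the blocks $\alpha_{i}\bm{H}_{i}$ appearing in the two constructions are literally the same.
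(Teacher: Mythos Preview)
Your argument is correct and is actually more rigorous than the paper's. The paper proves Theorem~\ref{th2} by a parameter count: with $l$ the output width of each GCN and $l'$ the first hidden width of the MLP, the first MLP layer has $l\cdot l'$ parameters under $sum(\cdot)$ but $rl\cdot l'$ under $CONCAT(\cdot)$; since $rl\cdot l'\geq l\cdot l'$, the paper concludes that the CONCAT variant has ``not weaker'' fitting capability. Your route is different: you interpret fitting capability as the realizable function class and exhibit an explicit weight embedding (stacking $r$ copies of the first-layer weight $\bm{W}$) so that any $F_{\mathrm{sum}}$ is reproduced exactly by some $F_{\mathrm{cat}}$, giving a genuine containment of hypothesis classes. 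What your approach buys is that it actually establishes the claim rather than appealing to the heuristic ``more parameters $\Rightarrow$ more capacity,'' which is not a theorem in general; what the paper's approach buys is brevity and the side observation (stated right after the proof) that large $r$ inflates the parameter count and risks over-fitting. Your remark on strict containment is a bonus not present in the paper.
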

\begin{proof}
Let $f_{1},f_{2},\cdots,f_{r}$ be the outputs of the GCNs.\\
$\therefore$ $\sum_{i=1}^{r}f_{i}\in \mathbb{R}^{n\times l}$, $\left [ f_{1},f_{2},\cdots,f_{r}\right ]\in \mathbb{R}^{n\times rl}$ and $l<lr$.\\
Let $l^{'}$ be the hidden nodes' number of MLP.\\
$\therefore$ The parameters of MLP with the aggregation function $sum\left ( \cdot  \right )$ is $l\cdot l^{'}$ and the parameters of MLP with the aggregation function $CONCAT(\cdot)$ is $rl\cdot l^{'}$ $\left ( r\geq 1 \right )$.\\
$\therefore$ The parameters of MLP with the aggregation function $sum\left ( \cdot  \right )$ is not more than the MLP with the aggregation function $CONCAT(\cdot)$.\\
$\therefore$ The fitting capability of MLP with the aggregation function $CONCAT(\cdot)$ is not weaker than the aggregation function $sum\left ( \cdot  \right )$.
\end{proof}

In Theorem \ref{th2}, the fitting capability of MLP will be too strong and over-fitting will also appear if $r$ is a large value. In addition, the time cost will also increase with the increase of \emph{r} value.

\section{The experimental results and analysis}
The proposed algorithm and the comparison algorithms are conducted on a server with Ubuntu 18.04.2 operating system, Intel Core i9-7900X CPU and 64G RAM. The proposed algorithm is implemented by Python 3.7 and the executable codes of the comparison algorithms are from the released codes in the papers.
\subsection{The comparison algorithms and the parameters of the proposed algorithm}
\subsubsection{The comparison algorithms}
In order to verify the performance of the proposed algorithm, the follow algorithms are selected as the comparison algorithms of this paper:
\begin{itemize}
  \item \textbf{Line} \cite{tang2015line}: it is a graph embedding algorithm based on the assumption of neighborhood similarity and introduces first-order proximity and second-order proximity to define the similarity between vertices in graph.
  \item \textbf{struc2vec} \cite{ribeiro2017struc2vec}: it is a graph embedding algorithm based on random walk and introduces structure similarity to define the similarity of any two nodes.
  \item \textbf{DANMF} \cite{ye2018deep}: it stacks multiple non-negative matrix factorization approaches as the autocoder and decoder layers to learn the final community assignment.
  \item \textbf{Graph2gauss} \cite{bojchevski2018deep}: it can efficiently learn versatile node embedding on large scale graph and embed nodes as Gaussian distribution to capture the uncertainty of the representation.
  \item \textbf{Modsoft} \cite{hollocou2019modularity}: it can discover the community structure by maximizing the modularity of the node partition and uses sparse matrix to record the partition result to improve the efficiency.
  \item \textbf{ProNE} \cite{zhang2019prone}: it is a fast and scalable network representation learning algorithm and the node embedding vectors can be efficiently obtained by the randomized tSVD approach.
  \item \textbf{DGI} \cite{velickovic2019deep}: it is a self-supervised graph neural network to learn structure information of a graph and aims at maximizing mutual information between the input data and output data.
  \item \textbf{GIC} \cite{mavromatis2020graph}: it is a self-supervised graph neural network and uses cluster-level node information to learn the dimensional embedding vectors of nodes.
  \item \textbf{GMI} \cite{peng2020graph}: it is a self-supervised graph neural network to learn the embedding vectors of nodes by mutual Information maximization.
  \item \textbf{SDGE-cat}: it is the SDGE algorithm with $CONCAT\left ( \cdot  \right )$ to aggregate the outputs of GCNs.
  \item \textbf{SDGE-sum}: it is the SDGE algorithm with $sum\left ( \cdot  \right )$ to aggregate the outputs of GCNs.
\end{itemize}

\subsubsection{The parameters of the proposed algorithm} For SDGE, \emph{k} is set to the real community number of data set. $\beta,\gamma\in \left [ 0,1 \right ]$, $\tau\in \left [ 0,100 \right ]$, $d=\left \{ 64,128 \right \}$ and $r=4$. For the GCNs in SDGE structure, the GCNs are with four layers structure and the neuron number of the different layers is $\left [ 200,170,140,100 \right ]$. The activation function of MLP is \emph{sigmoid}. The aggregation function is CONCAT or \emph{sum}.
\subsection{The experimental data sets}
In order to test the performance of SDGE and the comparison algorithms, the following data sets are selected as the experimental data sets:
\begin{itemize}
  \item \textbf{ACM} is an author relationship network from ACM Digital Library. The nodes represent papers and there is an edge between two nodes if there is the same author in two papers.
  \item \textbf{USA} is an air-traffic network of USA. A node represents an airport and an edge means the existence of commercial flight between two airports.
  \item \textbf{Image} is an image segmentation data set. The images are hand-segmented and each instance is a $3\times 3$ region. The attribute graph is constructed by KNN $\left ( K=10 \right )$.
  \item \textbf{Hyperplane} is an artificial data set from MOA platform\footnote{\url{https://moa.cms.waikato.ac.nz/downloads/}}. For a data point $\bm{x}=\left [ x_{1},x_{2},\cdots ,x_{d} \right ]\in \mathbb{R}^{d}$ and a \emph{d}-dimension hyperplane $\sum_{i=1}^{d}a_{i}x_{i}=a_{0}$, if $\sum_{i=1}^{d}a_{i}x_{i}\geq a_{0}$, $\bm{x}$ is marked as a positive sample, otherwise, $\bm{x}$ is marked as a negative sample. The attribute graph is constructed by KNN $\left ( K=10 \right )$.
  \item \textbf{Waveform} is a wave data set. Each sample is generated from a combination of 2 of 3 base waves. The attribute graph is constructed by KNN $\left ( K=10 \right )$.
\end{itemize}
The detailed information of the experimental data sets are presented in Table \ref{tb1}.
\begin{table}
\caption{The details of the experimental data sets.}
 \begin{tabular}{ccccccccccc}\toprule
  Data sets & \#nodes &\#edges &\#attribute &\#communities\\\midrule
  ACM & 3,025 & 26,256 &\textcolor[rgb]{0.85,0.00,0.00}{\XSolidBrush} &3 \\
  USA & 1,190 & 13,599 &\textcolor[rgb]{0.85,0.00,0.00}{\XSolidBrush} &4 \\
  Image & 2,100 &21,000 &\textcolor[rgb]{0.85,0.00,0.00}{\CheckmarkBold}$\left (19\right )$ &7 \\
  Hyperplane &4,000 &40,000 &\textcolor[rgb]{0.85,0.00,0.00}{\CheckmarkBold}$\left (40\right )$ &2 \\
  Waveform &3,500 &35,000 &\textcolor[rgb]{0.85,0.00,0.00}{\CheckmarkBold}$\left (21\right )$ &3 \\\bottomrule
 \end{tabular}\label{tb1}
\end{table}
\subsection{Evaluation criteria}
In order to evaluate the performance of the algorithms, the following evaluation criteria are used:\\
(1) Jaccard index (J):
\begin{equation}\label{eq12}
  J = \frac{TP}{TP+FN+FP}
\end{equation}
(2) Folkes and Mallows index (FM):
\begin{equation}\label{eq13}
  FM = \frac{TP}{\sqrt{\left ( TP+FN \right )\cdot \left ( TP+FP \right )}}
\end{equation}
(3) F1-measure ($F_{1}$):
\begin{equation}\label{eq14}
  F_{1}=\frac{2\cdot prcision\cdot recall}{prcision+recall}
\end{equation}
(4) Kulczynski index (K):
\begin{equation}\label{eq15}
  K=\frac{1}{2}\left ( \frac{TP}{TP+FP}+\frac{TP}{TP+FN} \right )
\end{equation}
where $prcision=TP/\left ( TP+FP \right )$ and $recall = TP/\left ( TP+FN \right )$. \emph{TP}, \emph{FP}, \emph{TN} and \emph{FN} are from confusion matrix. \emph{TP} is the number of data point pairs that they are in the same cluster and the real labels of the two data points are also the same. \emph{TN} is the number of data point pairs that they are in different clusters and the real labels of the two data points are also different. \emph{FP} is the number of data point pairs that they are in the same cluster and the real labels of the two data points are different. \emph{FN} is the number of data point pairs that they are in different clusters and the real labels of the two data points are the same.\\

\subsection{The performance on community discovery}
\subsubsection{The test results of the proposed algorithm and the comparison algorithms}
In order to test the effectiveness of the proposed algorithm, SDGE and the comparison algorithms are conducted on the five experimental data sets and the test results are showed as Tables \ref{tb2}-\ref{tb6}.
\begin{table}[h]
\caption{The experimental results of the algorithms on ACM data set.}
\centering
\begin{tabular}{cccccc}\toprule
Algorithms & J & FM & $F_{1}$ & K \\\midrule
Line &0.2514 &0.4094 &0.4018 &0.4171 \\
struc2vec &0.2568  &0.4216  &0.4079  &0.4361  \\
DANMF &0.1160  &0.2296  &0.2079  &0.2535 \\
Graph2gauss &0.2980  &0.4596  &0.4592  &0.4600 \\
Modsoft &0.0131  &0.1000  &0.0259  &0.3867 \\
ProNE &0.2752  &0.4510  &0.4316  &0.4714 \\
DGI &0.2268 &0.3701 &0.3697 &0.3705 \\
GIC &0.2998 &0.4880 &0.4611 &0.5165 \\
GMI &0.2485 &0.3982 &0.3979 &0.3986 \\
SDGE-cat &\textbf{0.3201}  &\textbf{0.5442}  &\textbf{0.4849}  &\textbf{0.6107} \\
SDGE-sum &0.3158  &0.5335  &0.4801  &0.5929 \\\toprule
\end{tabular}\label{tb2}
\end{table}

\begin{table}[h]
\caption{The experimental results of the algorithms on USA data set.}
\centering
\begin{tabular}{cccccc}\toprule
Algorithms & J & FM & $F_{1}$ & K \\\midrule
Line &0.2672 &0.4239 &0.4213 &0.4266 \\
struc2vec &0.1751  &0.3060  &0.2977  &0.3148 \\
DANMF &0.0218  &0.0762  &0.0427  &0.1359 \\
Graph2gauss &0.1718  &0.2979  &0.2931  & 0.3027\\
Modsoft &0.1057  &0.1967  &0.1911  &0.2024 \\
ProNE &0.1813  &0.3173  &0.3067  &0.3284 \\
DGI &\textbf{0.4859} &\textbf{0.6512} &\textbf{0.6511} &\textbf{0.6513}\\
GIC &0.3334 &0.5001 &0.5000 &0.5002 \\
GMI &0.2021 &0.3363 &0.3363 &0.3363 \\
SDGE-cat &0.1799  &0.3128  &0.3049  &0.3208 \\
SDGE-sum &0.2022  &0.3590  &0.3364  &0.3832 \\\toprule
\end{tabular}\label{tb3}
\end{table}

\begin{table}[h]
\caption{The experimental results of the algorithms on Image data set.}
\centering
\begin{tabular}{cccccc}\toprule
Algorithms & J & FM & $F_{1}$ & K \\\midrule
Line &0.1396 &0.2545 &0.2448 &0.2646 \\
struc2vec &0.0863 &0.1594  &0.1594  &0.1600 \\
DANMF &0.0884  &0.0884  &0.1625  &0.4106 \\
Graph2gauss &0.2204  &0.3614  &0.3613  &0.3614 \\
Modsoft &0.0649  &0.2273  &0.1219  &0.4238 \\
ProNE &0.1574  &0.2733  &0.2715  &0.2751 \\
DGI &0.2582 &0.4119 &0.4104 &0.4134\\
GIC &0.2446 &0.3939 & 0.3931 &0.3947 \\
GMI &0.2375 &0.3847 &0.3838 &0.3856 \\
SDGE-cat &0.1641  &0.3749  &0.2820  &\textbf{0.4985} \\
SDGE-sum &\textbf{0.2759}  &\textbf{0.4364}  &\textbf{0.4325}  &0.4403 \\\toprule
\end{tabular}\label{tb4}
\end{table}

\begin{table}[h]
\caption{The experimental results of the algorithms on Hyperplane data set.}
\centering
\begin{tabular}{cccccc}\toprule
Algorithms & J & FM & $F_{1}$ & K \\\midrule
Line &0.4650 &0.6592 &0.6348 &0.6846 \\
struc2vec &0.3701  &0.5427  &0.5394  &0.5460  \\
DANMF &0.0184  &0.1012  &0.1012  &0.2839 \\
Graph2gauss &0.3335  &0.5002  &0.5002  &0.5002 \\
Modsoft &0.0314  &0.1306  &0.0609  &0.2797 \\
ProNE &0.3369  &0.5040  &0.5040  &0.5040 \\
DGI &0.4108 &0.5904 &0.5823 &0.5986 \\
GIC &0.3352 &0.5021 &0.5021 &0.5021 \\
GMI &0.4190 &0.6004 &0.5905 &0.6105 \\
SDGE-cat &\textbf{0.4968}  &\textbf{0.7027}  &\textbf{0.6639}  &\textbf{0.7438} \\
SDGE-sum &0.4963  &0.7019  &0.6634  &0.7426 \\\toprule
\end{tabular}\label{tb5}
\end{table}

\begin{table}[h]
\caption{The experimental results of the algorithms on Waveform data set.}
\centering
\begin{tabular}{cccccc}\toprule
Algorithms & J & FM & $F_{1}$ & K \\\midrule
Line &0.3232 &0.4903 &0.4883 &0.4924\\
struc2vec &0.2554  &0.4183  &0.4066  &0.4303  \\
DANMF &0.0372  &0.1692  &0.0718  &0.3990 \\
Graph2gauss &0.3521  &0.5209  &0.5208  &0.5209 \\
Modsoft & 0.2603 &0.4532  &0.4130  &0.4973 \\
ProNE &0.3466  &0.5148  &0.5148  &0.5149 \\
DGI &0.3373 &0.5045 &0.5045 &0.5045 \\
GIC &0.3413 &0.5089 &0.5089 &0.5089 \\
GMI &0.3391 &0.5065 &0.5065 &0.5065 \\
SDGE-cat &\textbf{0.4386}  &\textbf{0.6223}  &\textbf{0.6098}  &\textbf{0.6351} \\
SDGE-sum &0.3892  &0.5723  &0.5603  &0.5846 \\\toprule
\end{tabular}\label{tb6}
\end{table}
Tables \ref{tb2}-\ref{tb6} present the results of the proposed and comparison algorithms on the experimental data sets. From the results, it can be seen that SDGE-cat obtains the best results on ACM,  Hyperplane and Waveform data sets. SDGE-sum obtains the best results on Image data set except for Kulczynski index. SDGE-cat obtains the best result of Kulczynski index on Image data set. Therefore SDGE outperformances the comparison algorithms on the most data sets except for USA data set. On USA  data set, SDGE-cat is the fourth best of the eleven algorithms and DGI is the best of all. For USA  data set, there are only 1,190 nodes and the scale of the data set is not too large. However, there are \emph{r} GCNs and a MLP in SDGE. Each GCN is with four layers and MLP is with double layers. By contrast, the structure of neural network is too complex for USA data set. It means that the number of the parameters of neural network is not consistent with the complexity of data set. The too complex neural network reduces the performance of SDGE. Therefore the neural network algorithms with the shallower layers such as DGI, GIC, etc., outperform SDGE on USA data set.
\subsubsection{The effects of the parameters}
In order to test the effects of the parameters on the performance of the proposed algorithm, SDGE is conducted on ACM data set and the parameters $\tau$, $\beta$ and $\gamma$ are set to the different values. The test results are showed as Figs. \ref{fg3}-\ref{fg5}.
\begin{figure*}
  \begin{minipage}{9cm}
    \centerline{\includegraphics[width=7cm]{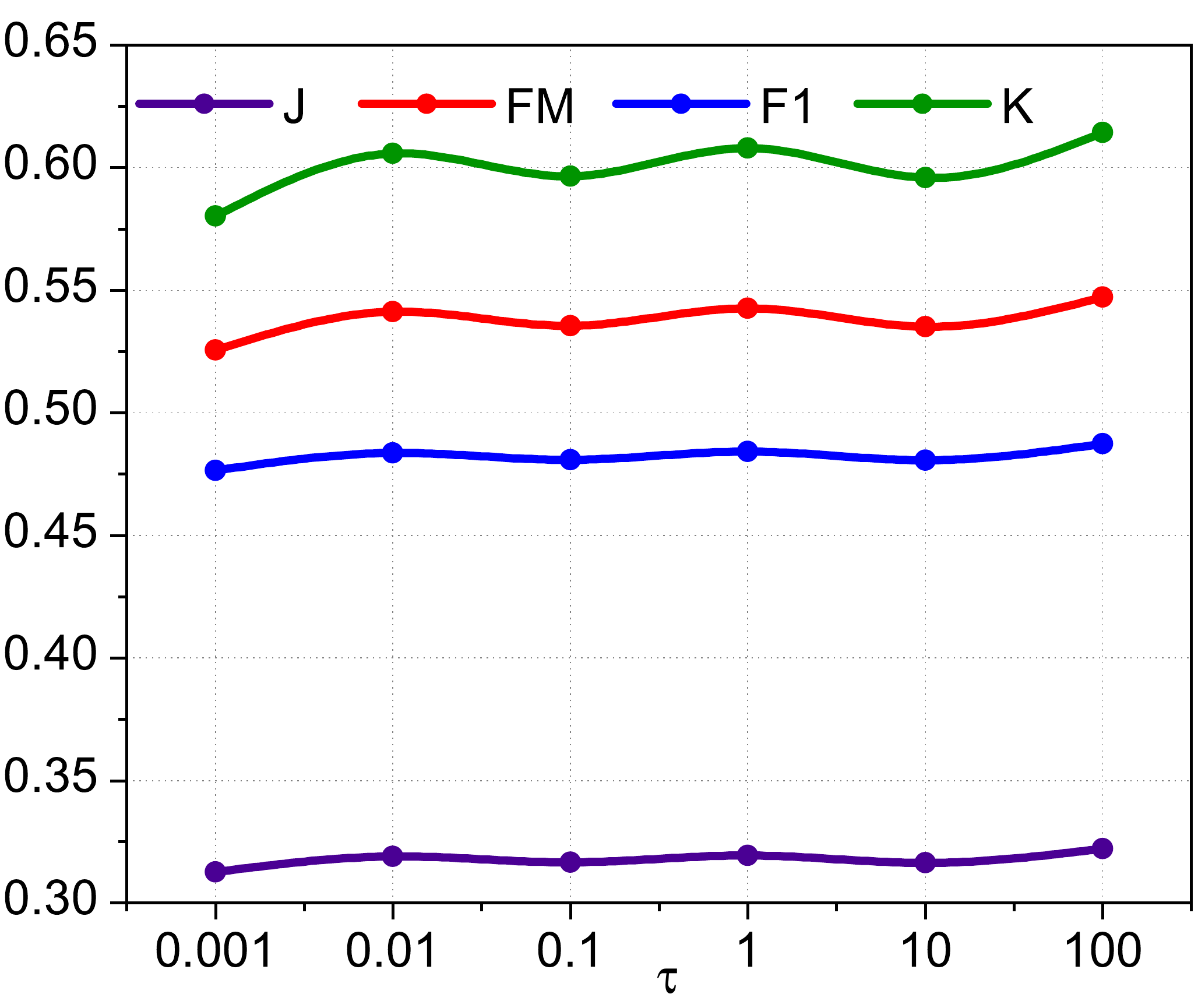}}
    \centerline{(a) SDGE-cat}
  \end{minipage}
  \hfill
  \begin{minipage}{9cm}
    \centerline{\includegraphics[width=7cm]{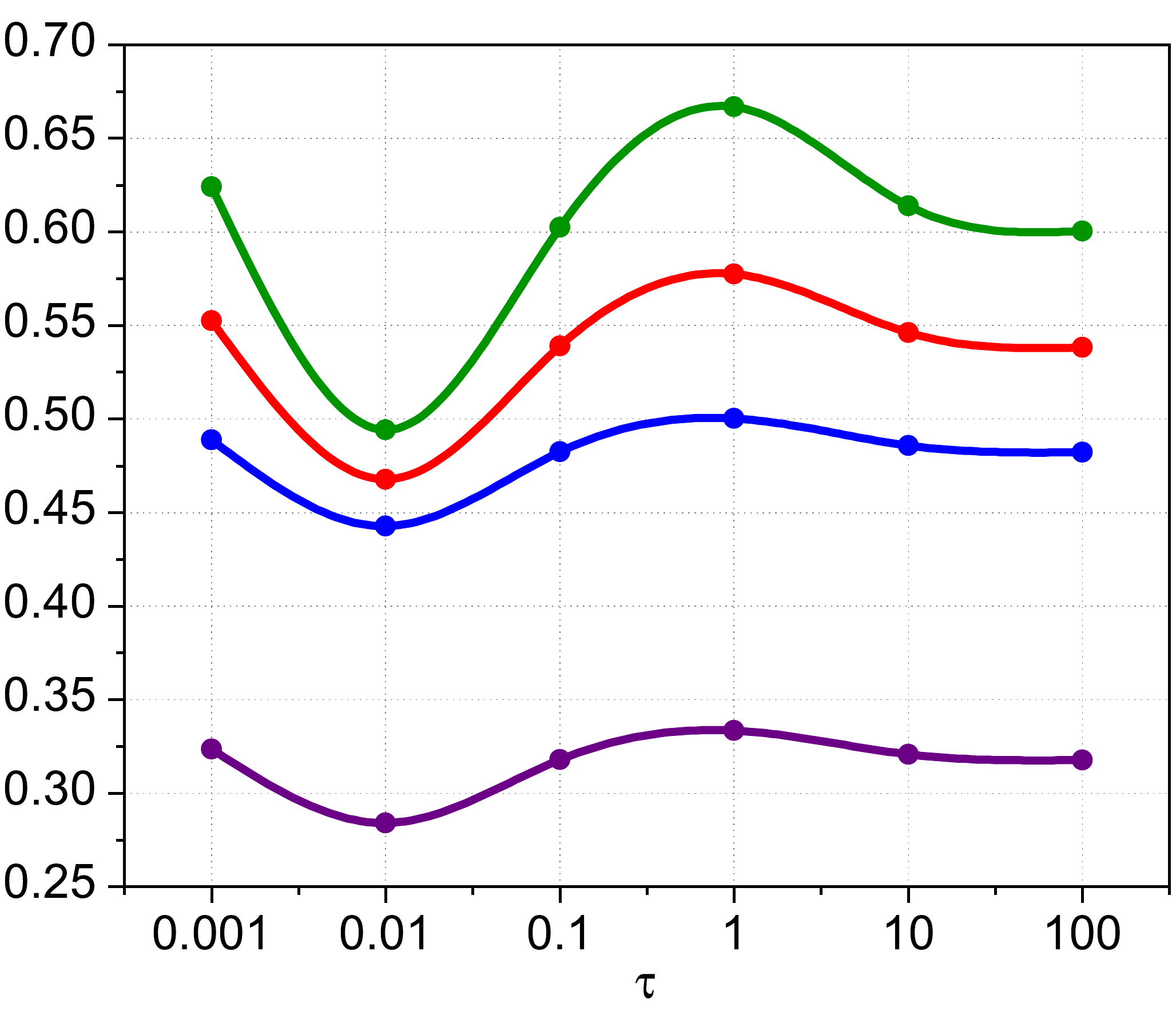}}
    \centerline{(b) SDGE-sum}
  \end{minipage}
\centering
\caption{The experimental results of SDGE with different $\tau$ values.}\label{fg3}
\end{figure*}

\begin{figure*}
  \begin{minipage}{9cm}
    \centerline{\includegraphics[width=7cm]{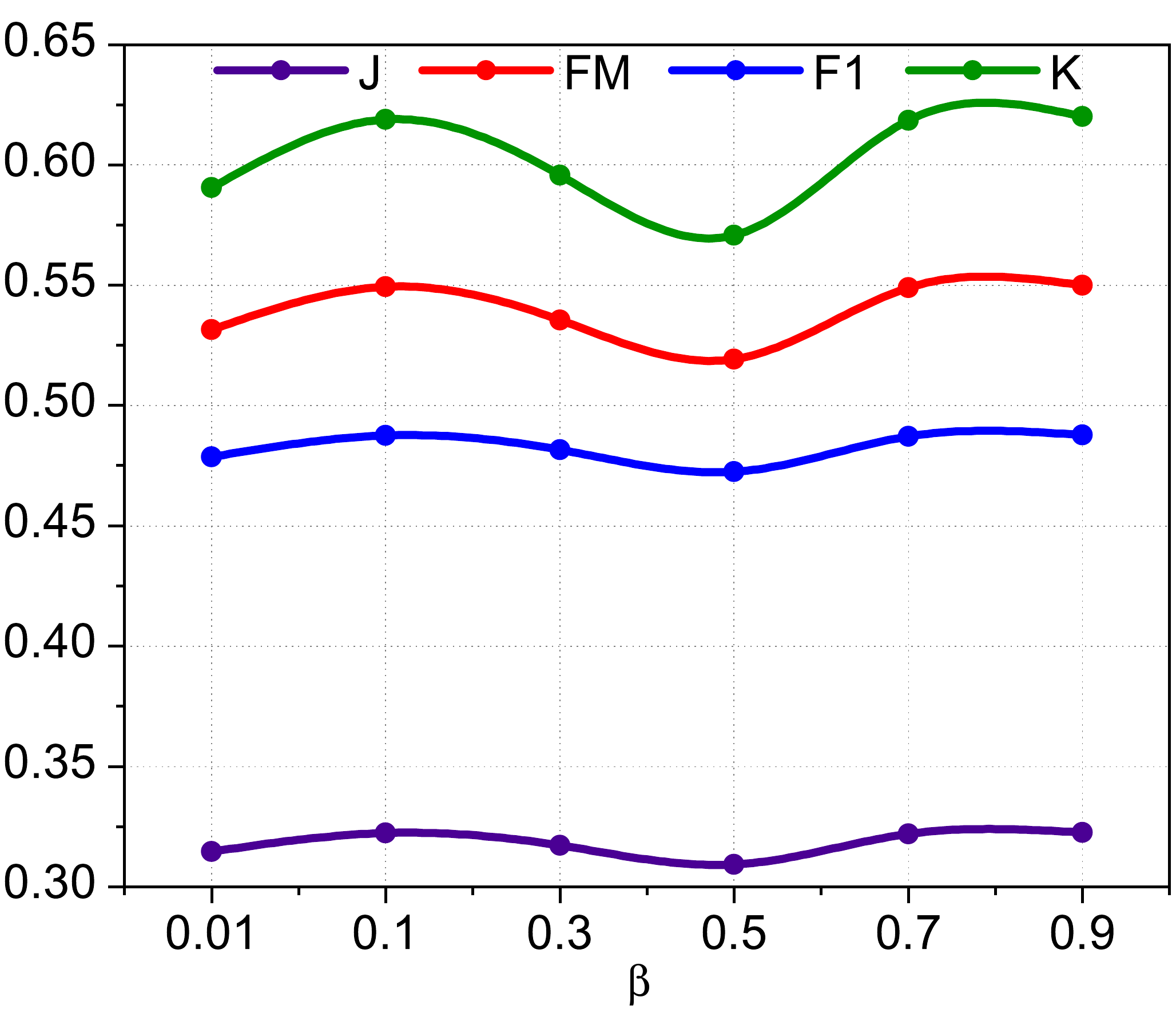}}
    \centerline{(a) SDGE-cat}
  \end{minipage}
  \hfill
  \begin{minipage}{9cm}
    \centerline{\includegraphics[width=7cm]{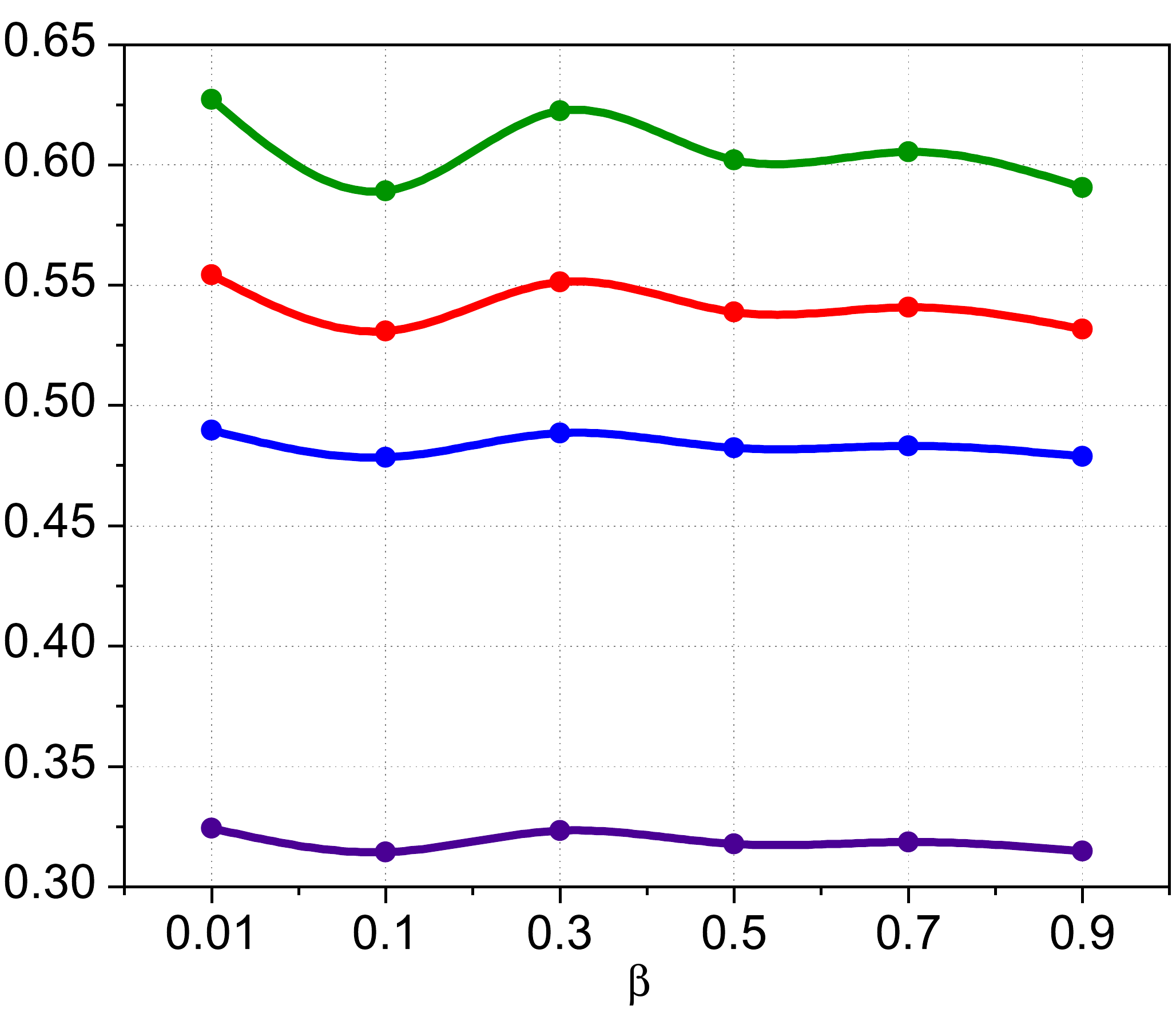}}
    \centerline{(b) SDGE-sum}
  \end{minipage}
  \vfill
\centering
\caption{The experimental results of SDGE with different $\beta$ values.}\label{fg4}
\end{figure*}

\begin{figure*}
  \begin{minipage}{9cm}
    \centerline{\includegraphics[width=7cm]{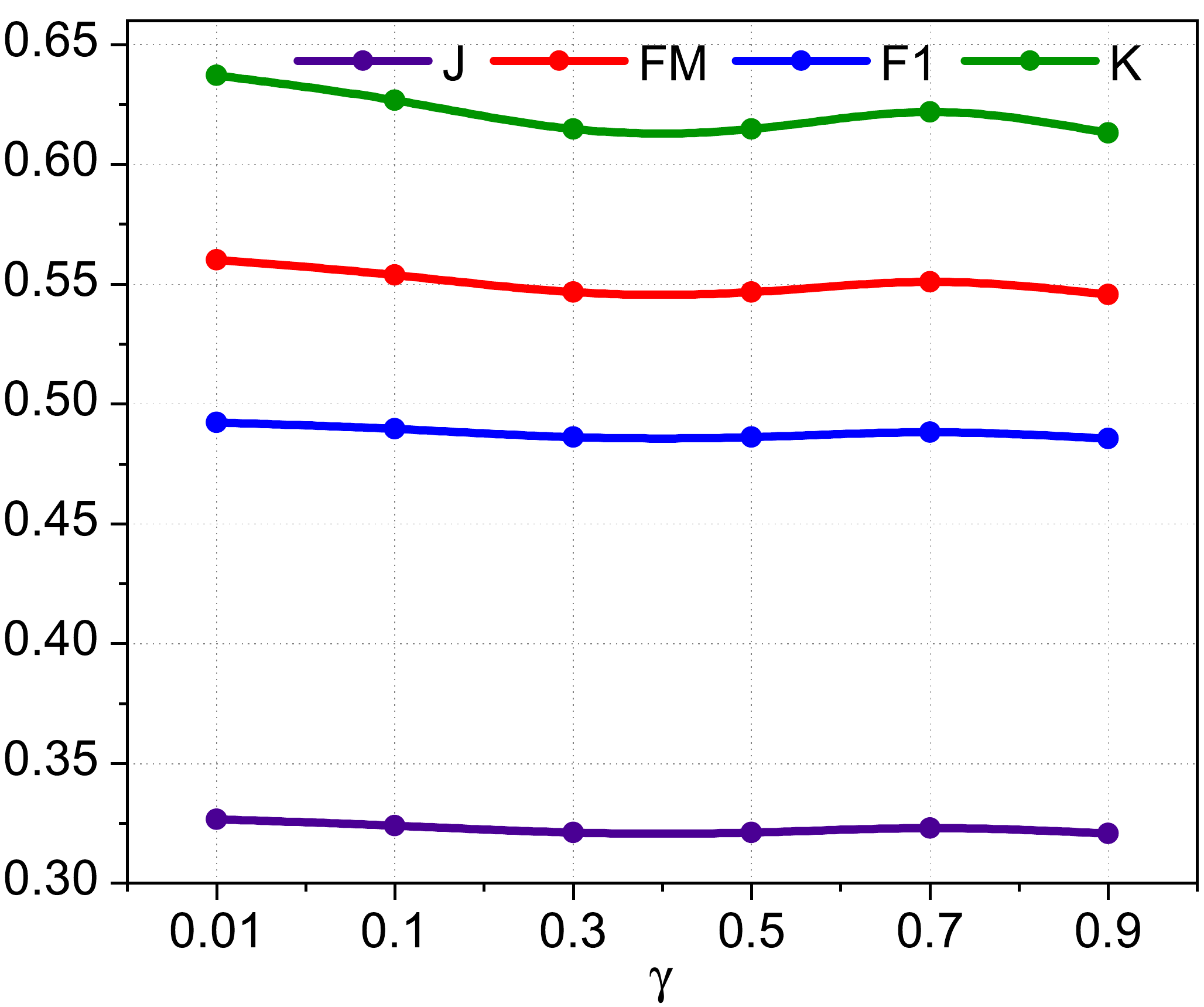}}
    \centerline{(a) SDGE-cat}
  \end{minipage}
  \hfill
  \begin{minipage}{9cm}
    \centerline{\includegraphics[width=7cm]{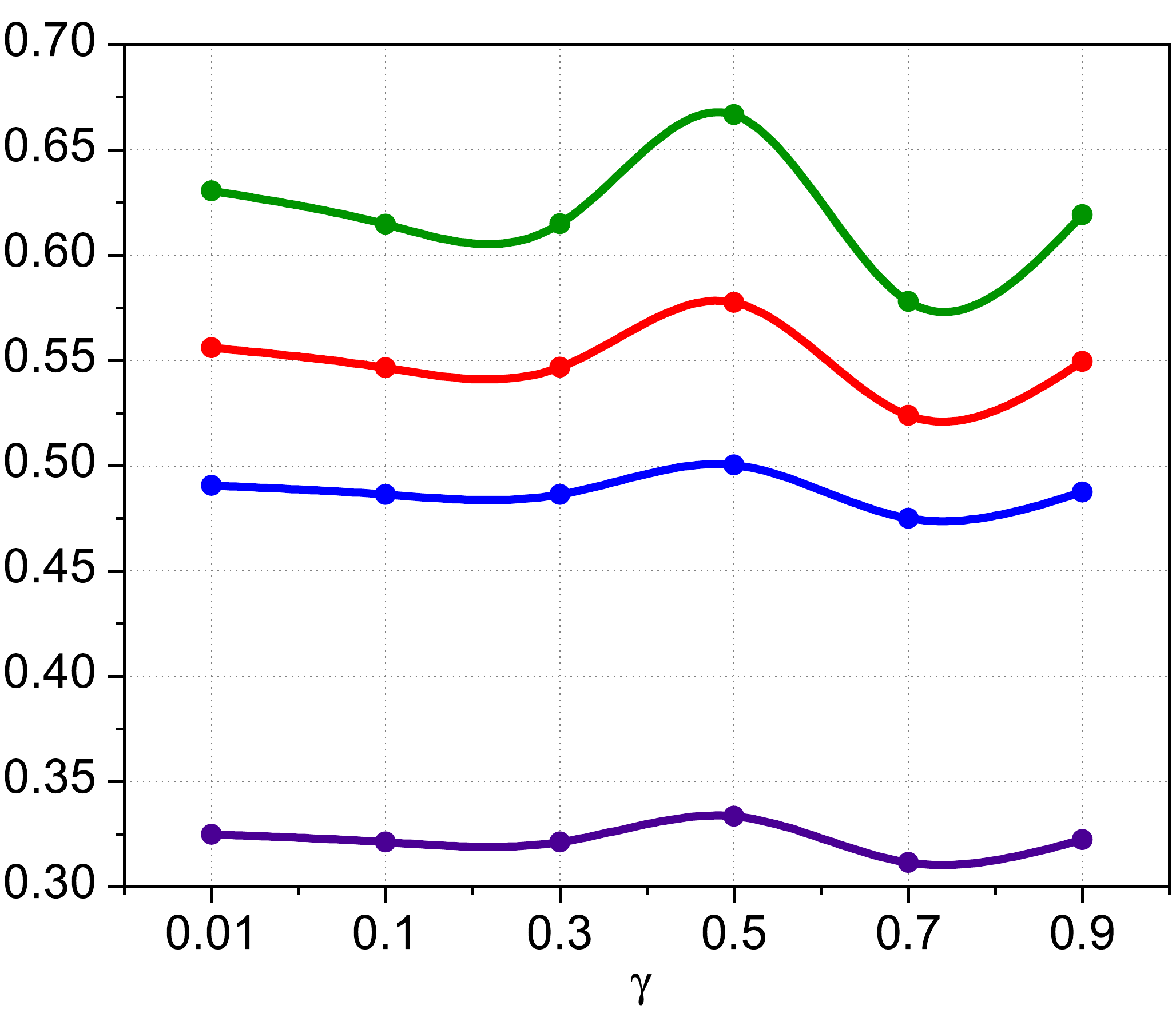}}
    \centerline{(b) SDGE-sum}
  \end{minipage}
  \vfill
\centering
\caption{The experimental results of SDGE with different $\gamma$ values.}\label{fg5}
\end{figure*}

Figs. \ref{fg3}-\ref{fg5} show the experimental results of SDGE with different parameters $\tau$, $\beta$ and $\gamma$ values on ACM data set. From the results, it is known that the parameters $\tau$, $\beta$ and $\gamma$ have influences on the performance of SDGE. For the parameter $\tau$ which is a temperature for the loss of self-supervised learning, it determines the effect of the similarity between two samples on the loss of Eq.(\ref{eq10}). For the parameter $\beta$, it determines the effect of the reconstruction error on the loss of Eq.(\ref{eq10}). For the parameter $\gamma$, it determines the effect of graph regularization on the loss of Eq.(\ref{eq10}). From Figs. \ref{fg3}-\ref{fg5}, the changed trends of SDGE-sum with different $\tau$, $\beta$ and $\gamma$ values are obvious. It means the influence of the parameters $\tau$, $\beta$ and $\gamma$ are also obvious on the performance of SDGE-sum. From Fig. \ref{fg4}(a), it can be seen that the changed trend of SDGE-cat with different $\beta$ values are obvious. In Fig. \ref{fg3}(a), the changed trend of SDGE-cat with different $\tau$ values is much more gradual than SDGE-sum although there is still some changes. It means SDGE-cat is not too sensitive to the parameter $\tau$.

From Fig. \ref{fg5}(a), the changed trend of SDGE-cat with different $\gamma$ values is smooth and the changes of evaluation criteria are small. It means that SDGE-cat is not sensitive to the parameter $\gamma$. From the theory of graph neural network, it is known that the embedding vector of each node comes from the feature vectors of the neighbors which is in tune with manifold learning approaches such as Local Linear Embedding, etc. It means graph neural network can learn the intrinsic manifold structure of graph data and the parameter $\gamma$ can only further increase the effect of graph regularization which has been inherently included in the learning process of graph neural network. Therefore the changed trend of SDGE-cat with different $\gamma$ values is smooth in Figure \ref{fg5}(a) if the GCNs are effectively trained and have good approximation ability.
\subsubsection{Ablative analysis}
It is still not clear if the good performance of SDGE are due to the fusion of high-order information and Dynamic ReLU \cite{chen2020dynamic}. In order to answer this question, the ablation of SDGE is studied in this section. A GCN with four layers which is trained by unsupervised autocoder (denoted as GCN-AE) is as the ablative analysis for the fusion of high-order information. The SDGE algorithm that the Dynamic ReLU activation function is replaced by ReLU activation function (denoted as SDGE-ReLU) is as the ablative analysis for Dynamic ReLU. The experimental algorithms are conducted on ACM data set. The experimental results are showed as Table \ref{tb7} and Figure \ref{fg6}.
\begin{table}[h]
\caption{The experimental results of the ablative analysis ($\tau=100$, $\beta=1$, $\gamma=1$ and \emph{epoch}=100).}
\centering
\begin{tabular}{ccccccc}\toprule
Algorithms & J & FM & $F_{1}$ & K \\\midrule
LLE &0.2634 &0.4366 &0.4156 &0.4597\\
GCN-AE &0.2025  &0.3368  &0.3368  &0.3368 \\\hline
SDGE-ReLU-cat &0.3186  &0.5399  &0.4832  &0.6033\\
SDGE-cat &\textbf{0.3222} &\textbf{0.5472} &\textbf{0.4874} &\textbf{0.6143} \\\hline
SDGE-ReLU-sum &0.3176 &0.5382 &0.4820 &0.6010 \\
SDGE-sum &0.3178  &0.5382  &0.4823  &0.6005\\\toprule
\end{tabular}\label{tb7}
\end{table}
\begin{figure*}
  \begin{minipage}{6cm}
    \centerline{\includegraphics[width=6cm]{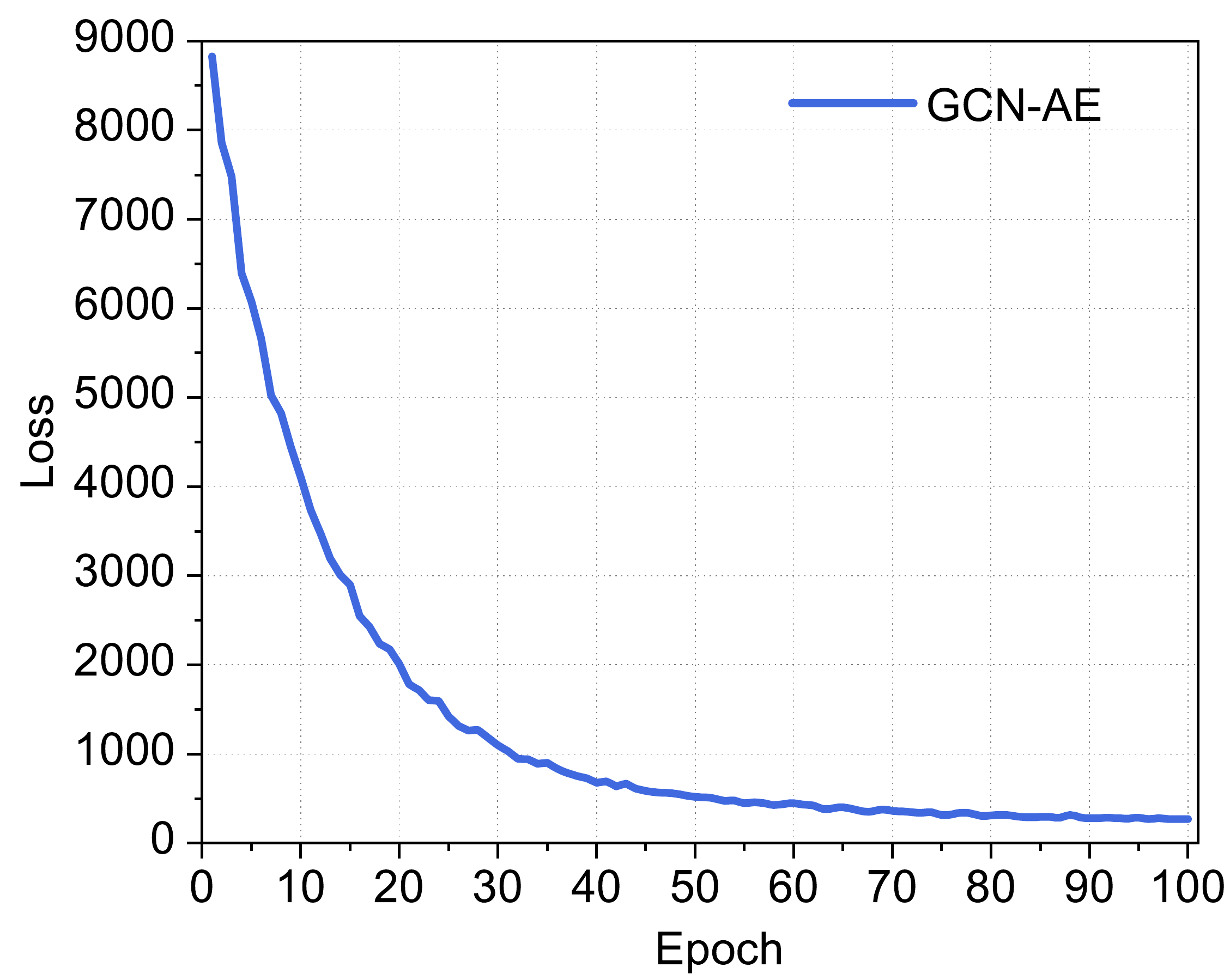}}
    \centerline{(a) GCN-AE}
  \end{minipage}
  \hfill
  \begin{minipage}{6cm}
    \centerline{\includegraphics[width=6cm]{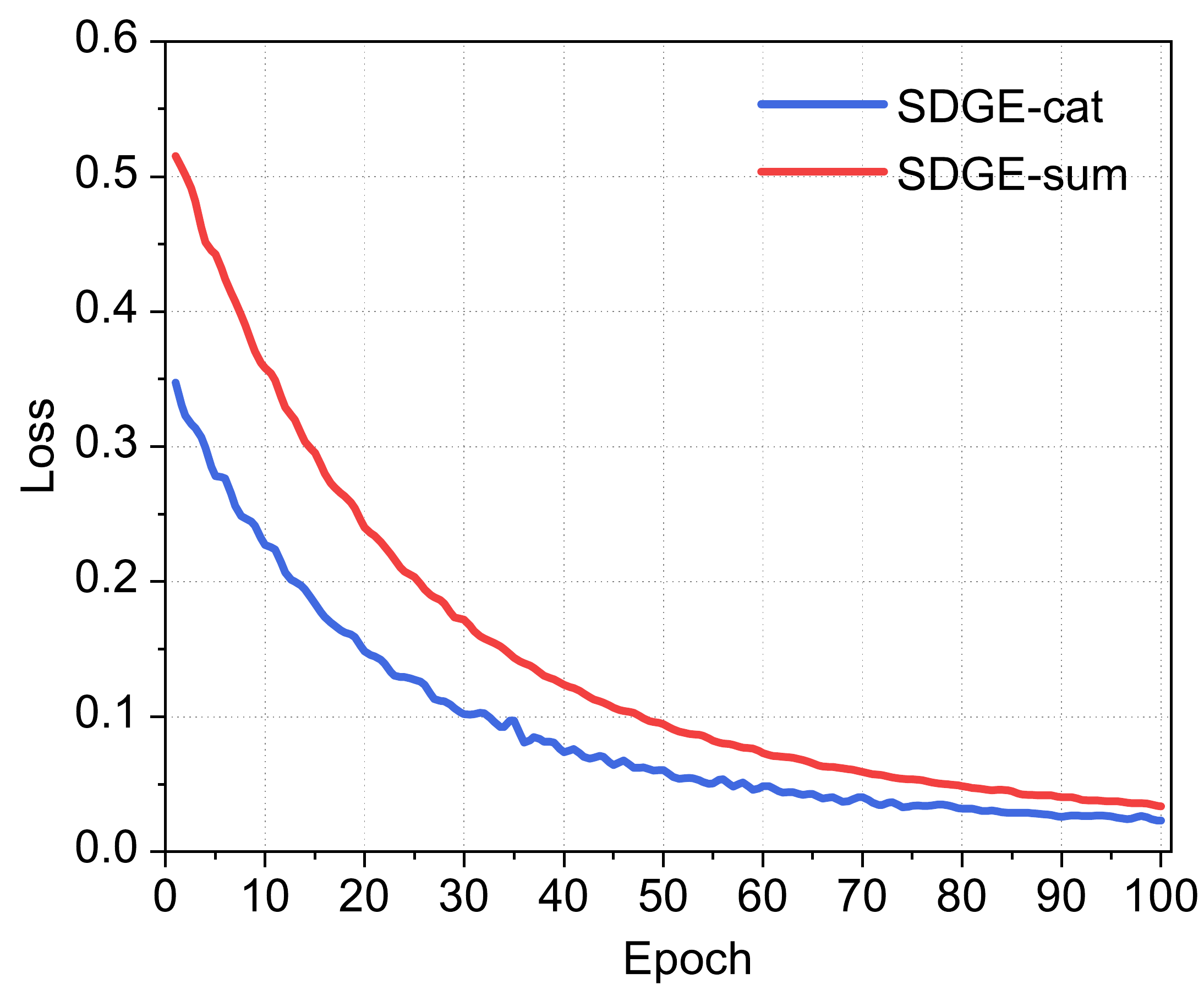}}
    \centerline{(b) SDGE}
  \end{minipage}
  \hfill
  \begin{minipage}{6cm}
    \centerline{\includegraphics[width=6cm]{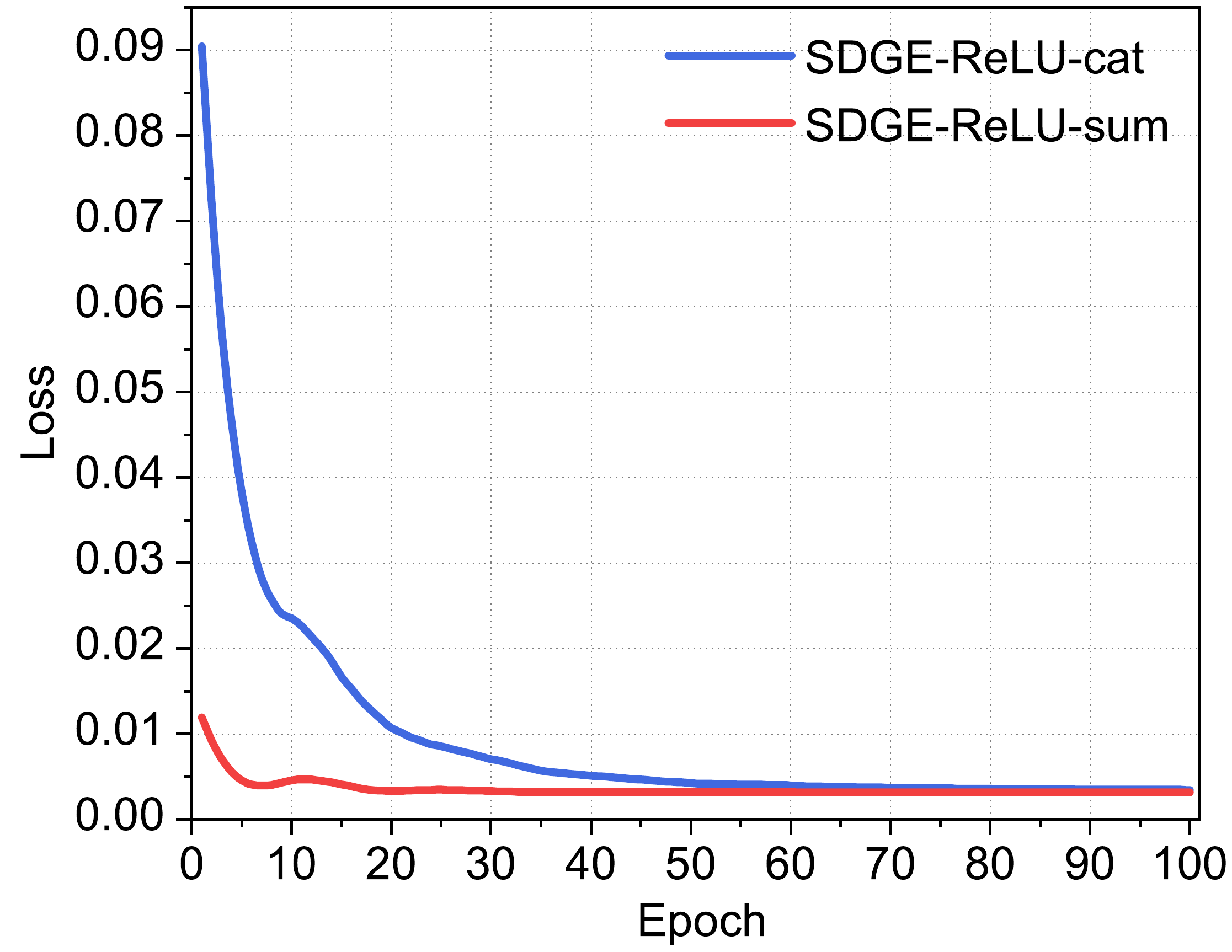}}
    \centerline{(c) SDGE-ReLU}
  \end{minipage}
  \vfill
\centering
\caption{The convergence of GCN-AE, SGDE and SGDE-ReLU.}\label{fg6}
\end{figure*}

Table \ref{tb7} shows the experimental results of the ablative analysis. In order to further confirm the effectiveness of the proposed algorithm, the experimental results of LLE which is a representative manifold learning algorithm is also presented in Table \ref{tb7}. From Table 7, it is known that SDGE-cat is the best of all. It demonstrates that SDGE is an effective approach for community discover. By comparing the results of SDGE-ReLU-cat, SDGE-cat, SDGE-ReLU-sum and SDGE-sum with the results of GCN-AE, it can be seen that the performance of SDGE outperforms GCN-AE and the advantages of the performance between SDGE and GCN-AE are obvious. The obvious advantages indicate the high-order information fusion of SDGE is effective and the fusion of the multiple outputs of GCNs can better learn the local and global structure information of a graph. In SDGE, each GCN can learn a kind of structure information, therefore SDGE can learn the graph information from multiple views and obtain the best results on the experimental data set.

In Table \ref{tb7}, SDGE-cat outperforms SDGE-ReLU-cat. The result demonstrates that the performance of SDGE can be improved by introducing Dynamic ReLU activation function. However, the difference of the performance between SDGE-cat and SDGE-ReLU-cat is small and the results of SDGE-ReLU-sum are almost equal to the results of SDGE-sum. From Fig. \ref{fg6}(b)-(c), it can be seen that SDGE-ReLU-cat and SDGE-ReLU-sum are fully convergent when \emph{epoch} is 100 and the losses of SDGE-cat and SDGE-sum can be future decreased although \emph{epoch} has reached 100. The trend of the loss in Fig. \ref{fg6}(b) means the performance of SDGE-cat and SDGE-sum can be future improved. In the current situation (\emph{epoch}=100), SDGE has already outperformed or be equal to SDGE-ReLU. The performance of SDGE will outperform SDGE-ReLU if SDGE is fully convergent. Therefore the results prove that Dynamic ReLU can improve the performance of SDGE.

Fig. \ref{fg6} shows the convergence of GCN-AE, SDGE and SDGE-ReLU. From the results, it can be seen that GCN-AE is fully convergent when \emph{epoch} is 80, SDGE-ReLU is fully convergent when epoch is 60 and the loss of SDGE can be future decreased after \emph{epoch} is larger than 100. The results also show that SDGE requires more iterations than GCN-AE and SDGE-ReLU. In SDGE, it needs to train multiple GCNs which increases the complexity of the algorithm. In addition, SDGE introduces Dynamic ReLU as activation function. From the reference \cite{chen2020dynamic}, it is known that the appropriate parameters of Dynamic ReLU are determined according to the global context which needs more computation. Therefore SDGE requires more iterations to reach the convergence.

\begin{table}[h]
\caption{The experimental results of the ablative analysis of spectral propagation (SP) on Hyperplane data set ($\tau=50$, $\beta=0.5$, $\gamma=0.5$ and \emph{epoch}=50).}
\centering
\begin{tabular}{cccccc}\toprule
Algorithms & J & FM & $F_{1}$ & K \\\midrule
SDGE-cat &\textbf{0.5000}  &\textbf{0.7071}  &\textbf{0.6667}  &\textbf{0.7500} \\
SDGE-cat-w/o-SP &0.4514  &0.6412  &0.6220  &0.6610 \\\hline
SDGE-sum &\textbf{0.5000}  &\textbf{0.7071}  &\textbf{0.6667}  &\textbf{0.7500}\\
SDGE-sum-w/o-SP&0.4974  &0.7034  &0.6643  &0.7448 \\\toprule
\end{tabular}\label{tb8}
\end{table}

\begin{figure}
  \centering
  \includegraphics[width=8cm]{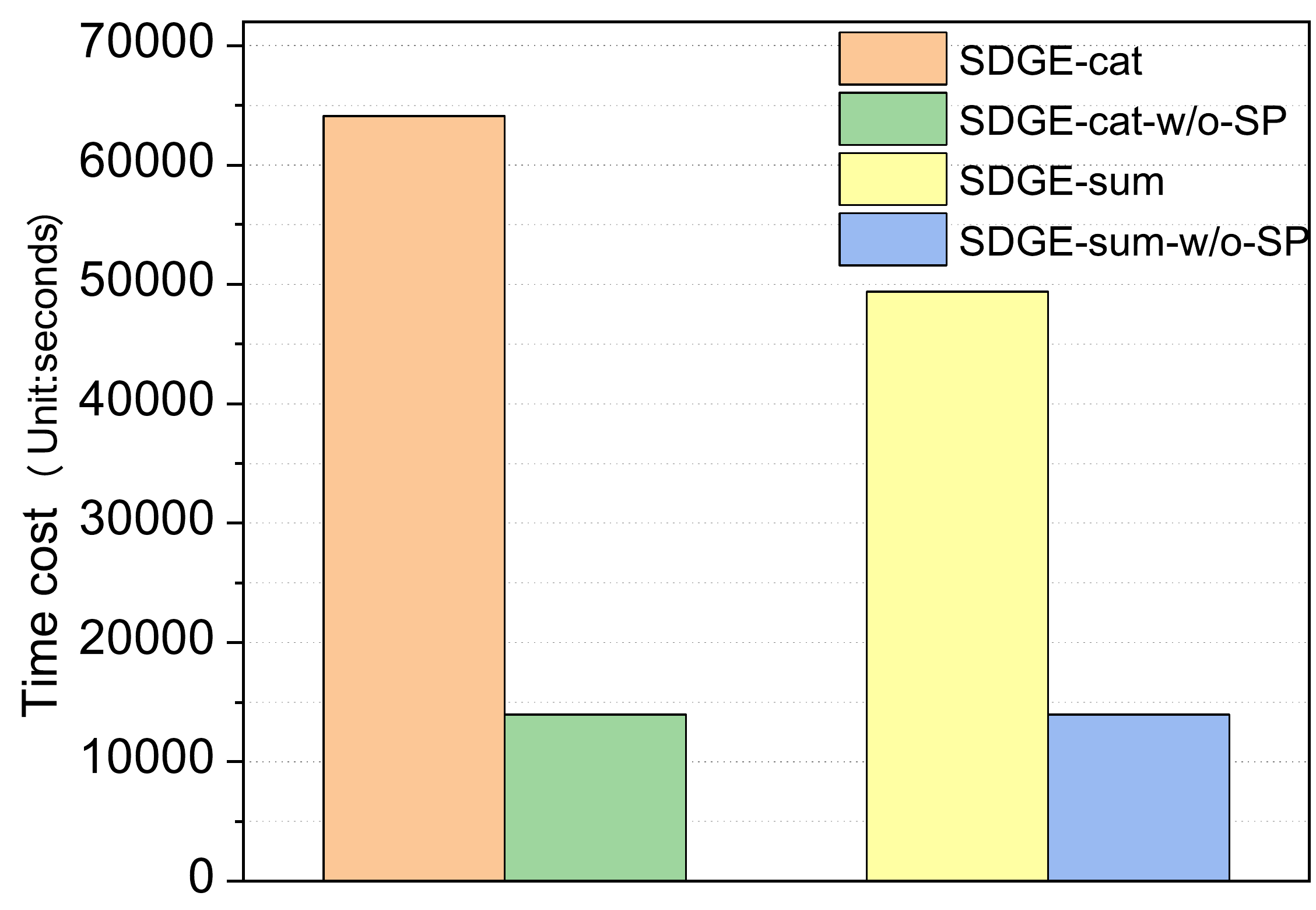}\\
  \caption{The time cost of the ablative analysis of spectral propagation (SP) on Hyperplane data set}\label{fg7}
\end{figure}

Table \ref{tb8} shows the ablative result of spectral propagation on Hyperplane data set. From the results, it is known that SDGE-cat and SDGE-sum outperform SDGE-cat-w/o-SP and SDGE-sum-w/o-SP, respectively. It demonstrates that spectral propagation improves the performance of SDGE-cat. When aggregate function is $CONCAT\left ( \cdot  \right )$, the performance degradation is obvious after spectral propagation is removed from SDGE-cat. The result provides evidence that spectral propagation plays an important role in SDGE-cat. However, the performance difference between SDGE-sum and SDGE-sum-w/o-SP is small although the performance degradation of SDGE-sum also appears after spectral propagation is removed from SDGE-sum. Therefore it can conclude that spectral propagation has more significant influence on SDGE-cat than SDGE-sum.

Fig. \ref{fg7} shows the time cost of the ablative analysis about spectral propagation on Hyperplane data set. From the results, it is known that both SDGE-cat and SDGE-sum spend more time on the calculation of embedding vectors than SDGE-cat-w/o-SP and SDGE-sum-w/o-SP. It means spectral propagation is a time consuming process. However, the results in Table \ref{tb8} show the effectiveness of spectral propagation. Therefore  spectral propagation is an effective approach to improve the performance of SDGE if the task is sensitive to time cost.
\subsubsection{Efficiency analysis}
In this section, the efficiency of SGDE is studied. Figure \ref{fg8} shows the time cost of SGDE algorithm on the experimental data sets.
\begin{figure*}
  \begin{minipage}{6cm}
    \centerline{\includegraphics[width=5.5cm]{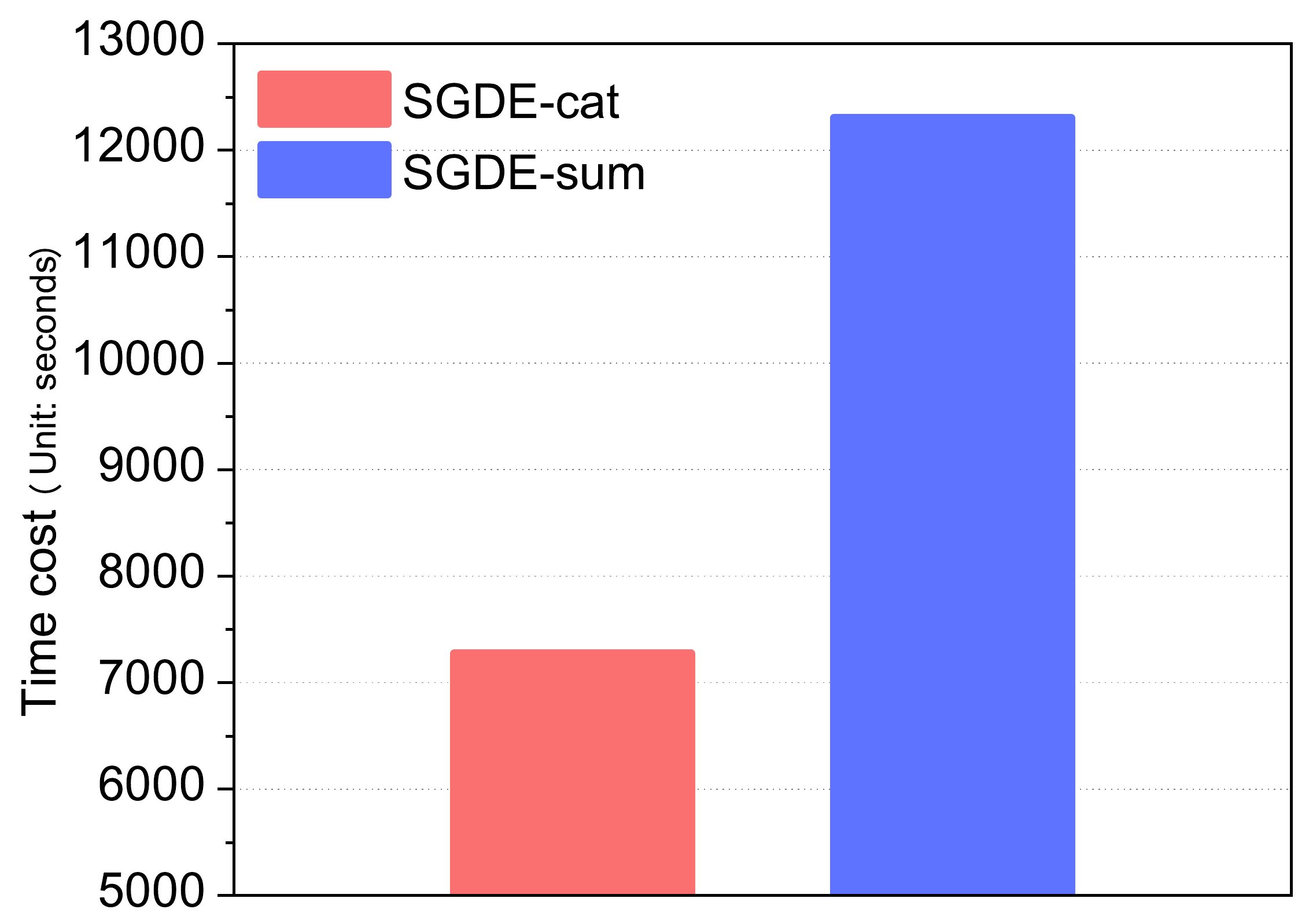}}
    \centerline{(a) ACM data set.}
  \end{minipage}
  \hfill
  \begin{minipage}{6cm}
    \centerline{\includegraphics[width=5.5cm]{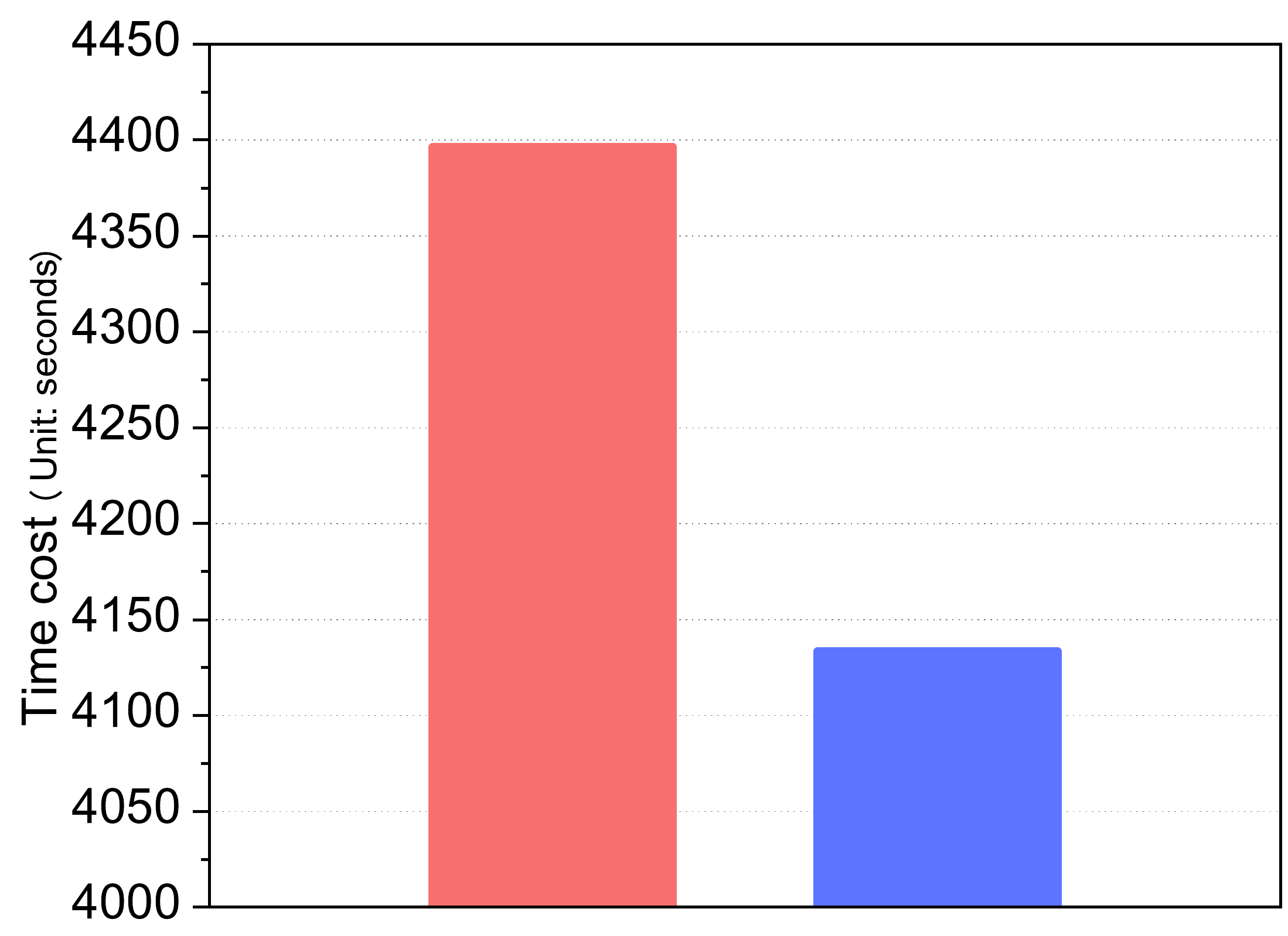}}
    \centerline{(b) USA data set.}
  \end{minipage}
  \hfill
  \begin{minipage}{6cm}
    \centerline{\includegraphics[width=5.5cm]{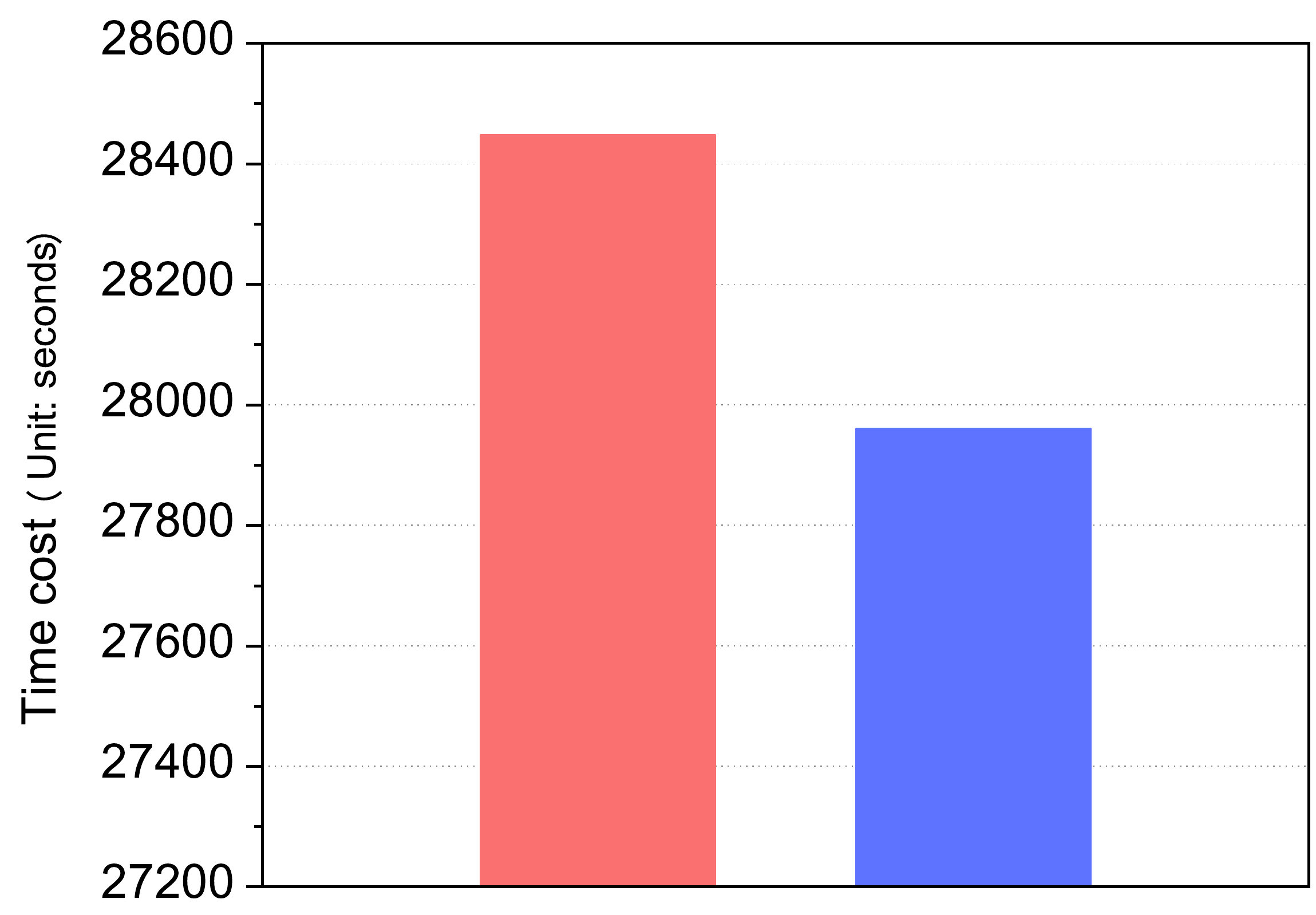}}
    \centerline{(c) Image data set.}
  \end{minipage}
  \vfill
  \begin{minipage}{6cm}
    \centerline{\includegraphics[width=5.5cm]{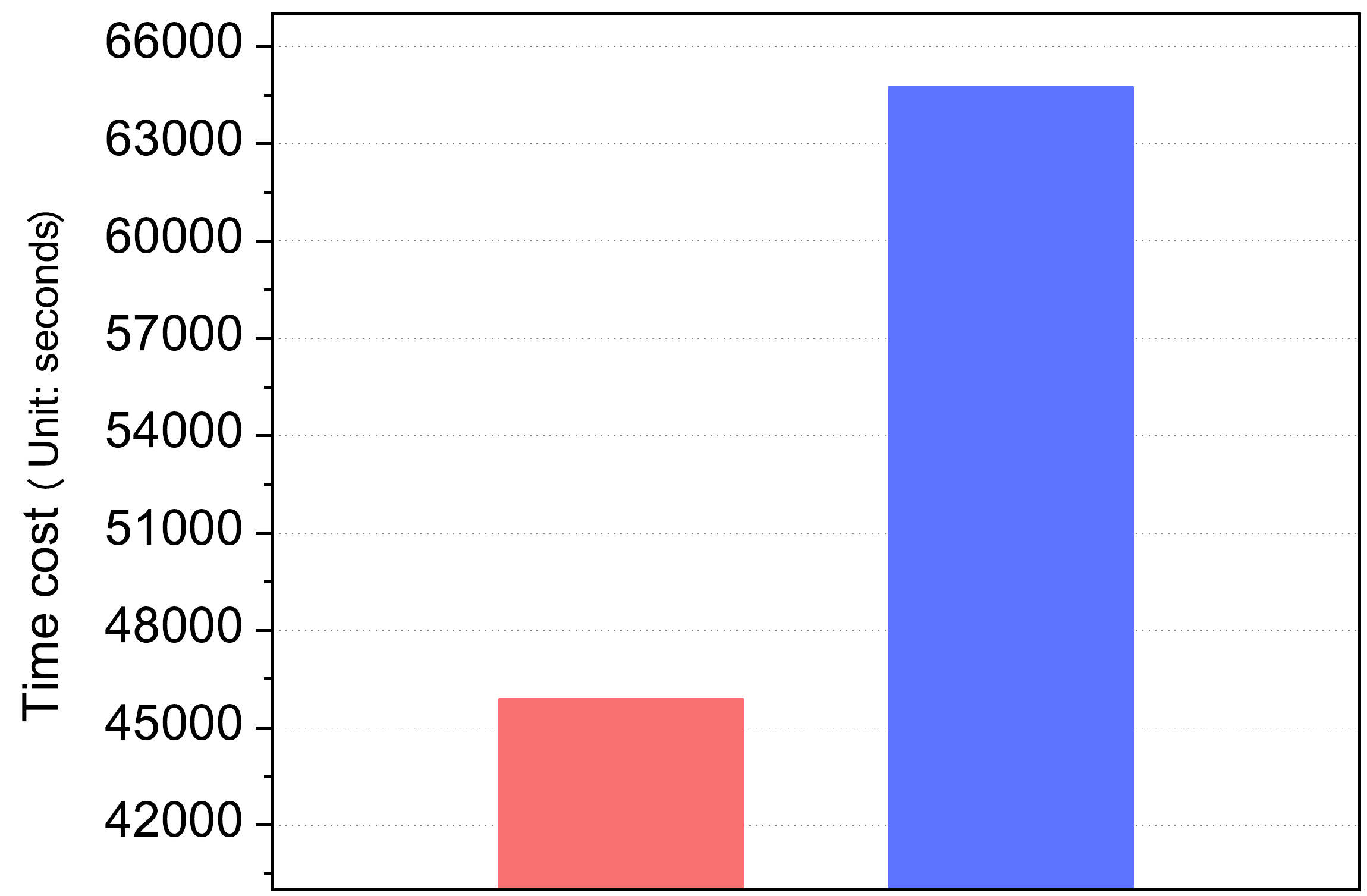}}
    \centerline{(d) Hyperplane data set.}
  \end{minipage}
  \hfill
  \begin{minipage}{6cm}
    \centerline{\includegraphics[width=5.5cm]{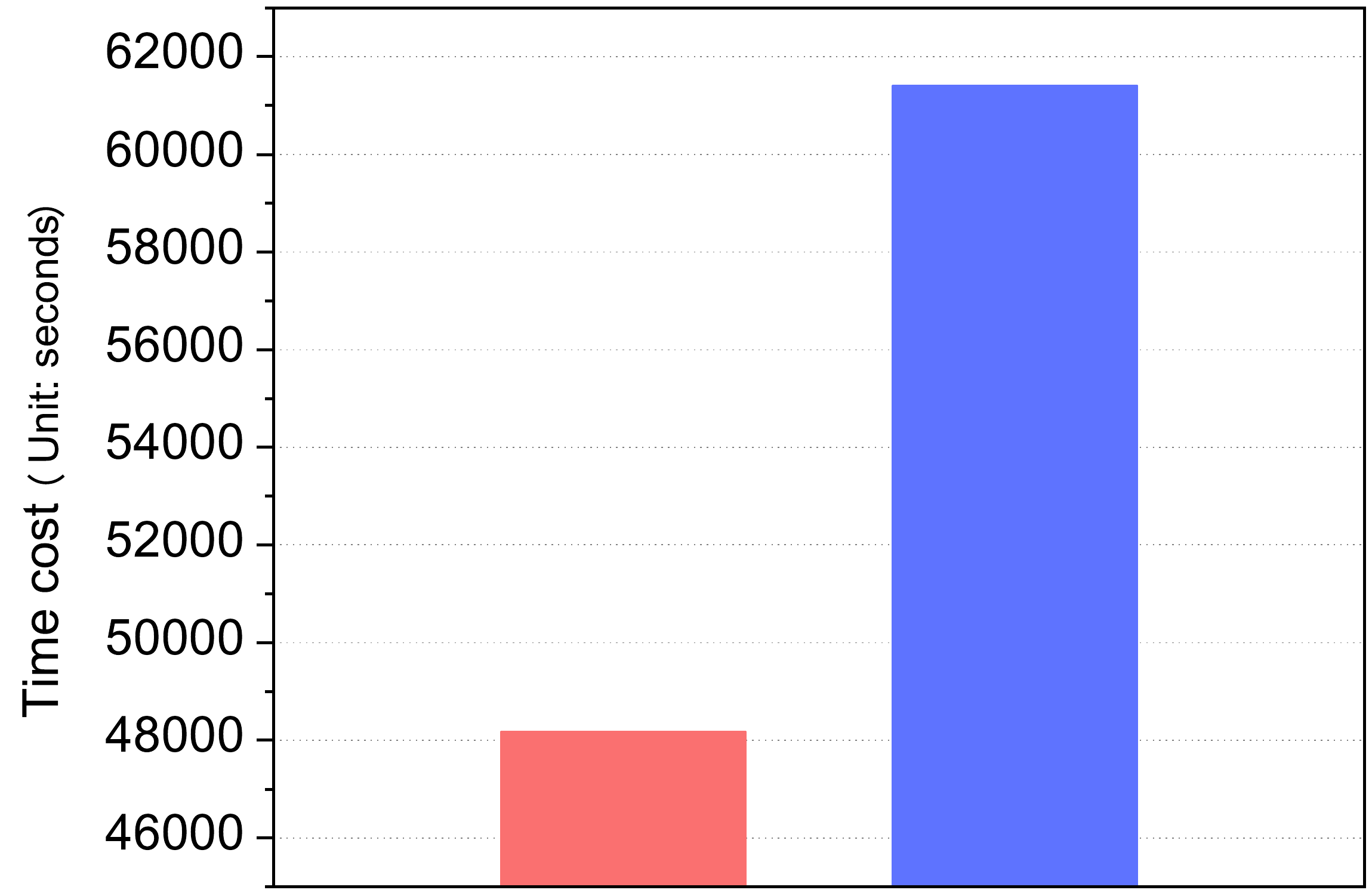}}
    \centerline{(e) Waveform data set.}
  \end{minipage}
  \vfill
\caption{The time cost of SGDE on the experimental data sets.}\label{fg8}
\end{figure*}

From the results, it can be seen that time cost of SDGE-sum is more than SGDE-cat on the most experimental data sets. It known that the outputs of GCNs are added if the aggregate function is $sum\left ( \cdot  \right )$ and the time complexity is $\mathcal{O}\left ( nr \right )$. If the aggregate function is $ CONCAT\left ( \cdot  \right )$, the outputs of GCNs are concatenated together and the time complexity is $\mathcal{O}\left ( rd_{r} \right )$ where $d_{r}$ is the output dimension of GCN. Due to the fact $(n\gg d_{r})$, the time complexity of $sum\left ( \cdot  \right )$ is higher than $CONCAT\left ( \cdot  \right )$ which is consistent with the results in Figs. \ref{fg8} except for Figs. \ref{fg8}(b)-(c). In Figs. \ref{fg8}(b)-(c), the time cost of $sum\left ( \cdot  \right )$ is less than $CONCAT\left ( \cdot  \right )$. For USA and Image data sets, the number of nodes \emph{n} is not too large and it means the difference between \emph{n} and $d_{r}$ is smaller than the other data sets. Therefore the time cost of $sum\left ( \cdot  \right )$ is closer to the time cost of $CONCAT\left ( \cdot  \right )$ on USA and Image data sets than the other data sets and the changed range in Figs. \ref{fg8}(b)-(c) also agrees with the above analysis. In addition, the initialization of graph neural network also plays an important role on the final time cost. An good initialization can speed up the convergence. Therefore the time cost of $CONCAT\left ( \cdot  \right )$ is more than  $sum\left ( \cdot  \right )$ on USA and Image data sets.
\section{Conclusions}
In this paper, we focus on building a self-supervised deep graph neural network named SDGE for node embedding and community discovery. Through the fusion of the outputs of multiple GCNs, SDGE can effectively utilize the high-order information of graph. The embedding result of SDGE can preserve the structure and node similarity in the low dimension embedding space. The spectral propagation is also introduced to enhance the embedding result. The extensive experiments on the experimental data sets demonstrate the effectiveness of SDGE algorithm.

In the current, pre-training has been proved that it is an effective approach to improve the performance of graph neural network. Therefore the future work would suggest to explore to introduce pre-training approach for SDGE. Moreover, the depth of neural network is important for the representation learning ability of GCN and the increasing depth of GCN is also a potential way of improving the proposed algorithm.

\ifCLASSOPTIONcompsoc
\section*{Acknowledgments}
This work was supported by National Natural Science Fund of China (Nos. 61972064, 61672130), Liaoning Revitalization Talents Program (No. XLY-C1806006), Fundamental Research Funds for the Central Universities (No. DUT19RC(3)012).
\fi

\ifCLASSOPTIONcaptionsoff
  \newpage
\fi

\bibliographystyle{IEEEtran}
\bibliography{acmart}
\end{document}